\newcommand{\E}{\mathbb{E}}
\newcommand{\e}{\epsilon}
\newcommand{\te}{\tilde{\epsilon}}
\newcommand{\eps}{\epsilon}
\newcommand{\heps}{\hat{\epsilon}}
\newcommand{\one}{\mathbf{1}}
\newcommand{\rev}{\textsc{Rev}}
\newcommand{\opt}{\textsc{Opt}}
\newcommand{\abs}[1]{\vert #1 \vert}
\newcommand{\event}{\mathcal{E}}
\newcommand{\ps}{\mathbf{p}}
\newcommand{\vs}{\mathbf{v}}
\newtheorem{theorem}{Theorem}[section]
\newtheorem{heuristic algorithm}{Heuristic Algorithm}
\newtheorem{lemma}{Lemma}
\newtheorem{corollary}{Corollary}
\providecommand{\customgenericname}{}
\newcommand{\newcustomtheorem}[2]{%
  \newenvironment{#1}[1]
  {%
   \renewcommand\customgenericname{#2}%
   \renewcommand\theinnercustomgeneric{##1}%
   \innercustomgeneric
  }
  {\endinnercustomgeneric}
}
\newif\ifincludeAppendix
\newif\ifincludeMainbody
\begin{document}

\title{Learning to Price Against a Moving Target}
\author{
Renato Paes Leme \\ Google Research \\ {\tt renatoppl@google.com} \and 
Balasubramanian Sivan \\ Google Research \\ {\tt balusivan@google.com} \and
Yifeng Teng \\ UW-Madison \\ {\tt yifengt@cs.wisc.edu} \and 
Pratik Worah \\ Google Research \\ {\tt pworah@google.com}
}
\date{}

\maketitle
\thispagestyle{empty}

\begin{abstract}
In the Learning to Price setting, a seller posts prices over time with the goal of maximizing revenue while learning the buyer's valuation. This problem is very well understood when values are stationary (fixed or iid). Here we study the problem where the buyer's value is a moving target, i.e., they change over time either by a stochastic process or adversarially with bounded variation. In either case, we provide matching upper and lower bounds on the optimal revenue loss. Since the target is moving, any information learned soon becomes out-dated, which forces the algorithms to keep switching between exploring and exploiting phases.
\end{abstract}


\section{Introduction}

Inspired by applications in electronic commerce, we study a problem where a seller repeatedly interacts with a buyer by setting prices for an item and observing whether the buyer purchases or not. These problems are characterized by two salient features: (i) binary feedback: we only observe if the buyer purchased or not, at the price we posted; (ii) discontinuous loss function: pricing just below the buyer's valuation incurs a small loss while pricing just above it incurs a large loss since it results in a no-sale.

This problem has been studied with many different assumptions on how the buyer valuation $v_t$ changes over time: fixed over time and i.i.d. draws each round were studied in \citep{kleinberg2003value, DPS19,cesa2019dynamic}, deterministic contextual \citep{amin2014repeated,cohen2016feature,lobel2016multidimensional,Intrinsic18,liu21}, contextual with parametric noise \cite{nazerzadeh2016} and contextual with non-parametric noise \cite{virag09,krishnamurthy2020contextual}. All those models are stationary in the sense that the buyer's model is  i.i.d. across time. The exceptions to this are algorithms that consider valuations that are drawn  adversarially  \citep{kleinberg2003value}, but that work still compares with the best single price in hindsight. I.e., even though the buyer model is non-stationary, the benchmark still is.

Our main goal in this paper is to explore settings where both the buyer model and the benchmark are non-stationary. We will compare our revenue with the first-best benchmark, namely, the sum of the buyer's value at every single step. We will however assume that the buyer's valuation moves slowly.

\paragraph{Motivation} Our main motivation for this study is online advertising. Display ads are mostly sold through first price auctions with reserve prices \cite{paes2020competitive}. In many market segments, the auctions are thin, i.e., there is just one buyer, who bids just above the reserve when his value exceeds the reserve (to both pay as little as possible, and not reveal their true value) and doesn't bid otherwise. This scenario effectively offers just binary feedback, and also makes reserve price the only pricing tool (i.e., not much auction competition). 
To see why buyer value changes are typically slow, and unknown to the seller: the effective value of a buyer, even for two identical queries, is similar but not exactly the same due to factors such as remaining budget. A common scenario is that a  buyer has a spend target stating a target $\theta_t$ of their daily budget to be spent by time $t$. Bids often become a function of the ratio between the actual spend and the target spend. The auction platform doesn't know the targets/bidding formula, but it can use the fact that both target and actual spend, and hence the bids, will change smoothly over time.

Another important motivation is to effectively price buyers who are learning about their own valuation. This is a common setup in finance \cite{shreve2004stochastic} where traders constantly acquire new information about the products they are trading, and update their valuations accordingly.

Our results and techniques are presented in Section \ref{sec:results} after we formally define our model in Section \ref{sec:setting}.

\paragraph{Related Work} 
Our work is situated in the intersection of two lines of work in online learning: online learning for pricing (discussed earlier in the introduction) and online learning with stronger benchmarks, such as tracking regret \cite{herbster2001tracking,luo2015achieving}, adaptive regret \cite{hazan2007adaptive}, strongly adaptive online learning \cite{daniely2015strongly} and shifting bandits \cite{foster2016learning,lykouris2018small}. The difficulty in applying this line of work to pricing problems is that even when the valuation $v_t$ changes slightly, the loss function itself will change dramatically for certain prices. Instead here, we exploit the special structure to the revenue loss to obtain better regret bounds.

There is another line of work that studies revenue maximization in the presence of evolving buyer values \citep{KLN13, PST14, CDKS16}. While all these works consider the cumulative value over time as benchmark, there are important differences, the first two papers have full feedback: they design mechanisms that solicit buyer bids.~\citeauthor{CDKS16} shoot for simple pricing schemes yielding constant factor approximations, while we seek to obtain much closer to the optimal. Moreover, in their model the values evolve only when the buyer purchases the good.

\section{Setting}\label{sec:setting}

\paragraph{General setting.} A seller repeatedly interacts with a buyer over $T$ time steps, offering to sell an item at each step. 
The buyer's value is $v_t\in[0,1]$ at time step $t\in[T]$, and is changing across time. The seller has no knowledge of the buyer's value at any time, not even at $t=1$. At each time step $t$, the seller posts a price $p_t\in[0,1]$ to the buyer, and the buyer purchases the item if and only if she can afford to buy it, i.e. $p_t\leq v_t$. The binary signal
\[\sigma_t=\one \{ v_t \geq p_t \} \in \{0,1\}\]
of whether the item is sold or not is the only feedback the seller obtains at each time step. The seller's objective is to maximize his the total revenue, i.e.
$\rev = \sum_{t=1}^{T} p_t \sigma_t.$

\paragraph{Loss metrics.}
The benchmark we are comparing to is the \textit{hindsight optimal} revenue we can get from the buyer, which is the sum of her value across all time periods:
$\opt = \sum_{t=1}^{T} v_t.$
The \textit{revenue loss} at any time step $t$ is defined by
\(\ell_{\textsc{R}}(p_t,v_t)=v_t-p_t\sigma_t\),
and the revenue loss of any price vector $\ps=(p_1,\cdots,p_T)$ for value profile $\vs=(v_1,\cdots,v_T)$ is defined by 
\[\ell_\textsc{R}(\ps,\vs)=\frac{1}{T}(\opt-\rev)=\frac{1}{T}\sum_{t=1}^{T} (v_t-p_t\sigma_t).\]
The \textit{symmetric loss} at any time step $t$ is defined by $\ell_1(p_t,v_t)=|v_t-p_t|$, and the total symmetric loss of any price vector $\ps=(p_1,\cdots,p_T)$ for value profile $\vs=(v_1,\cdots,v_T)$ is defined by 
\[\ell_1(\ps,\vs)=\frac{1}{T}\sum_{t=1}^{T} |v_t-p_t|.\]
Intuitively, the revenue loss determines how much revenue we lose compared to a hindsight optimal selling strategy. The symmetric loss determines how close our price guesses are from the correct value vector of the buyer.

\section{Our Results and Techniques}\label{sec:results}

Next we describe our main results in this model. Our results will be given in terms of the changing rate, which we define as follows: consider a sequence of buyer valuations $v_1, v_2, \hdots, v_T$ and a sequence $\epsilon_1,\dots,\epsilon_{T-1}$ (all $\epsilon_t \leq 1$).
We say that a sequence of buyer valuations $\{v_t\}_{t=1..T}$ has changing rate $\{\epsilon_t\}_{t=1..T-1}$ whenever
\begin{equation}\label{val:eqn}
    \abs{v_{t+1} - v_t} \leq \epsilon_t.
\end{equation}
The average changing rate is $\bar \epsilon = \frac{1}{T-1}\sum_{t=1}^{T-1} \eps_t$. Our guarantees are a function of the average changing rate $\bar \epsilon$ but our algorithms are \emph{agnostic} to both the changing rate sequence and its average (except in warmup section~\ref{sec:known_constant_eps}). 

We will consider the problem in two environments:
\begin{enumerate}
\item Adversarial: an adaptive adversary chooses $v_t$'s.
\item Stochastic: an adaptive adversary chooses a mean-zero distribution supported in $[-\epsilon_t, \epsilon_t]$ from which $v_{t+1}-v_t$ is drawn so that $v_t$ is a bounded martingale. 
More generally, our results hold for any stochastic process satisfying Azuma's inequality.
\end{enumerate}

We analyze three settings in total: symmetric loss in the adversarial environment, and revenue loss in both the adversarial and stochastic environments. In each of the three settings, we design our algorithms gradually starting from the simple case where the changing rate is fixed and known (warmup, Section~\ref{sec:known_constant_eps}), then fixed and unknown (Section~\ref{sec:unknown_constant_eps})), and finally dynamic and unknown (Section~\ref{sec:unknown_dynamic_eps})).



\paragraph{Dynamic, non-increasing changing rates} In Section~\ref{sec:unknown_dynamic_eps}, where we study dynamic and unknown $\epsilon_t$, we focus primarily on the case where the changing rates $\epsilon_t$ are non-increasing over time. This is motivated by situations where buyers use a learning algorithm with non-increasing learning rates (quite common) to determine their value, or using a controller to bid that stabilizes once the value approaches the ground truth. For symmetric loss alone (Theorem \ref{thm:adv-unknownchangingeps-symmetric}), we give guarantees for any sequence $\epsilon_t$  (i.e., not just non-increasing) but we get an additional $\log(T)$-factor in loss. 

\paragraph{Our results} For symmetric loss, we develop an algorithm with average loss $\tilde{O}(\bar\epsilon)$ (Theorem~\ref{thm:adv-decreasingeps-symmetric}) [and a slightly larger loss of $\tilde{O}(\bar \epsilon \log T)$ when the $\epsilon_t$'s are not necessarily non-increasing (Theorem~\ref{thm:adv-unknownchangingeps-symmetric})]. For revenue loss we show average loss of $\tilde{O}(\bar \epsilon^{1/2})$ in the adversarial setting (Theorem \ref{thm:adv-decreasingeps-pricing}) and $\tilde{O}(\bar \epsilon^{2/3})$ in the stochastic setting (Theorem \ref{thm:rand-decreasingeps-pricing}). Throughout, we use $\tilde{O}(f)$ to denote $O(f)\text{polylog}(1/\bar\epsilon) + o(1)$ where the $o(1)$ is with respect to $T$. Surprisingly, our loss bounds for none of the three settings in the general case of dynamic and unknown changing rates (Section~\ref{sec:unknown_dynamic_eps}) can be improved even if the changing rate is fixed and known (Section~\ref{sec:known_constant_eps}, Theorems~\ref{thm:adv-knownfixedeps-symmetric},~\ref{thm:adv-knownfixedeps-pricing},~\ref{thm:rand-knownfixedeps-pricing}). I.e., in our fairly general model, knowing the rate of change of valuation, or having the rate of change be fixed, don't provide much help in learning the valuation function to profitably price against it! 

\paragraph{Our techniques}
\emph{Step 1: fixed and known $\epsilon$ (Section \ref{sec:known_constant_eps}):} Here, the algorithm keeps in each period a confidence interval $[\ell_t, r_t]$ containing $v_t$. If $r_t - \ell_t$ is small, we price at the lower endpoint $\ell_t$, resulting in a larger interval $[\ell_{t+1}, r_{t+1}] = [\ell_t-\epsilon, r_t+\epsilon]$ in the next iteration. Once the interval is large enough, we decrease its size by a binary search to decrease and re-start the process. The algorithm thus keeps alternating between exploring and exploiting, where the length of the interval decides when we do what.

\emph{Step 2: fixed and unknown $\epsilon$ (Section \ref{sec:unknown_constant_eps}):} Here, we start guessing a very small value of $\epsilon$, say $\hat \epsilon = 1/T$ and we behave as if we knew $\epsilon$. If we detect any inconsistency with our assumption, we double our guess: $\hat \epsilon \leftarrow 2 \hat \epsilon$. It is easy to detect when the value is below our confidence interval (since we always price at the lower point) but not so easy to detect when it is above. To address this issue, we randomly select certain rounds (with probability proportional to a power of our estimate $\hat \epsilon$) to be \emph{checking rounds}. In those rounds, we price at the higher end of the interval.

\emph{Step 3: dynamic, non-increasing and unknown $\epsilon_t$ (Section \ref{sec:unknown_dynamic_eps}):} We again keep an estimate $\hat \epsilon$ like in Step 2, but now we adjust it in both directions. We increase $\hat \epsilon$ as in the previous step if we observe anything inconsistent with the guess. For the other direction, we \emph{optimistically} halve the guess ($\hat \epsilon \leftarrow \hat \epsilon/2$) after we spend $1/\hat \epsilon$ periods with the same guess.

\paragraph{Comment on average versus total loss.}  Our results are all framed in terms of the average loss instead of the total loss, thus suppressing the factor $T$. To see why, note that even if we knew $v_{t-1}$ exactly for each $t$, and even if values evolved stochastically, we would already incur an $O(\epsilon)$ loss per step leading to a total regret of $O(T\epsilon)$. Consequently, the main question is not the dependence on $T$ (which is necessarily linear and cannot become any worse), but how much worse does the dependence on $\epsilon$ get, per time step. To see this in action, it is instructive to compare our algorithms with the algorithm in \cite{kleinberg2003value} which has sublinear loss with respect to a \emph{fixed price benchmark}. For a fixed $\epsilon$ consider the periodic sequence $v_t$ that starts at zero and increases by $\epsilon$ in each period reaching $1$ and then decreases by $\epsilon$ in each period reaching $0$ and starts climbing again. The first-best benchmark is $\sum_t v_t = \frac{T}{2} + O(\epsilon)$ while the best fixed price benchmark is $\frac{T}{4} + O(\epsilon)$. Our algorithm guarantees total revenue $T(\frac{1}{2} -O(\sqrt{\epsilon}))$, while~\citeauthor{kleinberg2003value} guarantee a revenue of $T \frac{1}{4} - O(\sqrt{T})$. In the supplementary material we show that for this example their algorithm indeed suffers a total loss of $\Omega(T)$ with respect to the first-best benchmark, while our algorithms suffer $T\tilde{O}(\sqrt{\epsilon})$, i.e., much better dependence on $\epsilon$.

\section{Warmup: Buyer's Changing Rate is Fixed, Known}\label{sec:known_constant_eps}

We begin by studying the symmetric loss in the adversarial environment. The result is straightforward via a binary search algorithm that keeps track of a \textit{\textbf{confidence interval}} $[\ell_t,r_t]$ that contains the true value $v_t$ in each time step. 

\begin{theorem}\label{thm:adv-knownfixedeps-symmetric}
If the buyer has adversarial values and a fixed changing rate $\epsilon$ that is known to the seller, there is an online pricing algorithm that achieves symmetric loss $O(\e)$. Further, no online algorithm can obtain symmetric loss less than $\Omega(\e)$.  
\end{theorem}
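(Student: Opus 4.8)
The plan is to handle the upper bound by a binary-search-with-drift algorithm maintaining a confidence interval $[\ell_t, r_t]$ known to contain $v_t$, and the lower bound by an adversary argument exploiting the binary feedback. For the upper bound: initialize $[\ell_1, r_1] = [0,1]$. In each round, if the interval width $w_t = r_t - \ell_t$ exceeds some threshold (I would take the threshold to be $\Theta(\epsilon)$, e.g.\ $4\epsilon$), post the midpoint $p_t = (\ell_t + r_t)/2$ as an exploration/binary-search step: the signal $\sigma_t$ tells us which half contains $v_t$, so we can replace the interval by the appropriate half. If instead $w_t \le 4\epsilon$ (exploitation), post $p_t = \ell_t$, which is safe in the sense that $|v_t - p_t| \le w_t = O(\epsilon)$. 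After any round, to account for the value drifting by at most $\epsilon$, we must inflate the interval: $[\ell_{t+1}, r_{t+1}] = [\ell_t' - \epsilon, r_t' + \epsilon]$ where $[\ell_t', r_t']$ is the post-feedback interval; this still contains $v_{t+1}$ by \eqref{val:eqn}.

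The core of the upper-bound analysis is a potential/amortization argument on the interval width. Each binary-search step roughly halves the width (then adds $2\epsilon$), and each step (search or exploit) adds $2\epsilon$. Starting from width $\le 1$, it takes $O(\log(1/\epsilon))$ consecutive search steps to bring the width from $O(1)$ down to $O(\epsilon)$; during those steps the width is geometrically decreasing plus lower-order $\epsilon$ terms, so the worst per-step symmetric loss during a search episode is bounded by the current width, and the total loss over one full search episode is $O(\log(1/\epsilon)) \cdot O(1)$... — but amortized over the $\Omega(1/\epsilon)$ exploitation steps that follow before the width again exceeds $4\epsilon$, the average loss per step is $O(\epsilon \log(1/\epsilon))$, or with a more careful accounting $O(\epsilon)$ since only $O(\log(1/\epsilon))$ of every $\Omega(1/\epsilon)$ steps are ``bad''. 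I would track $\Phi_t = \max(0, r_t - \ell_t - c\epsilon)$ or simply sum the geometric series of widths over each search phase directly; the key quantitative fact is that one search phase contributes $O(1)$ total extra loss and is always followed by $\Theta(1/\epsilon)$ cheap steps, giving $\frac{1}{T}\sum_t |v_t - p_t| = O(\epsilon)$ once $T$ is large (the $o(1)$ in $\tilde O$ absorbs boundary effects, e.g.\ an unfinished final phase).

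For the lower bound $\Omega(\epsilon)$: I would use an adversary that keeps $v_t$ in a small window and exploits that binary feedback reveals only one bit per round. Concretely, fix a sub-interval of length $\Theta(\epsilon)$; since the value can move by $\epsilon$ each step, the adversary can, after seeing $p_t$, choose $v_{t+1}$ within the allowed $\epsilon$-ball to keep $|v_t - p_t| \ge c\epsilon$ for a constant fraction of rounds — for instance, if the algorithm's price history has not pinned down the value to precision better than $\epsilon$ (which one bit per round cannot do fast enough when the target also moves), the adversary maintains two ``plausible'' trajectories that differ by $\Theta(\epsilon)$ yet are consistent with all feedback so far, and on each round at least one of them is $\Omega(\epsilon)$ away from $p_t$; averaging (or a minimax/Yao argument with $v_{t+1} - v_t$ uniform on $\{-\epsilon, +\epsilon\}$ and stopping the random walk when it would exit $[0,1]$) yields expected symmetric loss $\Omega(\epsilon)$ per step. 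The main obstacle I anticipate is the upper-bound amortization: getting the clean $O(\epsilon)$ rather than $O(\epsilon\log(1/\epsilon))$ requires arguing that the expensive binary-search rounds are rare enough relative to the cheap exploitation rounds, i.e.\ carefully charging the $O(\log(1/\epsilon))$ search rounds of each phase against the $\Omega(1/\epsilon)$ subsequent exploitation rounds — this is the step where one must be careful that the constant in the exploitation threshold makes the bookkeeping close.
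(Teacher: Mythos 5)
Your overall approach coincides with the paper's: a confidence interval $[\ell_t,r_t]$ maintained by binary search and inflated by $\epsilon$ per step for the upper bound, and a $\pm\epsilon$ random-walk (Yao) adversary for the $\Omega(\epsilon)$ lower bound — the latter is exactly the paper's argument (even knowing $v_{t-1}$, any price suffers expected loss $\geq\epsilon$ against $v_t=v_{t-1}\pm\epsilon$), so that part is fine.

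The genuine flaw is in your amortization for the upper bound, which you yourself flag as the core step. Pricing at $p_t=\ell_t$ when $v_t\in[\ell_t,r_t]$ always returns $\sigma_t=1$, so an exploitation step yields no information and the interval simply grows by $2\epsilon$. With your threshold $4\epsilon$, an exploitation run therefore lasts $O(1)$ steps (the width is back above $4\epsilon$ after one step), not the $\Omega(1/\epsilon)$ cheap steps your charging scheme relies on; as stated, the claim ``one search phase \dots is always followed by $\Theta(1/\epsilon)$ cheap steps'' is false for your algorithm. Fortunately no amortization is needed at all, and the obstacle you anticipate (a spurious $\log(1/\epsilon)$ factor) is a phantom in the symmetric-loss setting: when you price at the midpoint the per-step loss is at most half the current width, so the initial descent from width $1$ costs only $O(1)$ in total (the widths decay geometrically, summing to $O(1)+O(\epsilon\log\frac{1}{\epsilon})$), and once the width is $O(\epsilon)$ \emph{every} step — search or exploit — costs $O(\epsilon)$. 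This is precisely what the paper does: it always prices at the midpoint, bounds the loss at step $t$ by half the interval length $a_t$, and uses $\sum_t a_t = O(1)+O(T\epsilon)$. The distinction between ``expensive'' midpoint probes and ``safe'' lower-endpoint prices that you are importing only becomes necessary for revenue loss (Theorem~\ref{thm:adv-knownfixedeps-pricing}), where overshooting the value costs $\Omega(1)$; for symmetric loss your proof goes through once you replace the faulty charging argument by this direct width bound.
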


\begin{proof}
Firstly, consider the value sequence to be an unbiased random walk with step size $\eps$, and starting point $v_1=\frac{1}{2}$. The symmetric loss from each time step $t$ is $\Omega(\eps)$ even if $v_{t-1}$ is known to the seller, since the seller does not know whether $v_{t}=v_{t-1}+\eps$ or $v_{t}=v_{t-1}-\eps$. Thus no online algorithm can obtain a symmetric loss $o(\epsilon)$ for this instance.

Then we show that a naive binary search algorithm achieves symmetric loss $O(\epsilon)$. The algorithm keeps track of a \textit{\textbf{confidence interval}} $[\ell_t,r_t]$ that contains the true value $v_t$ in each time step. 

\begin{algorithm}[htb]
\caption{Symmetric loss minimizing algorithm for adversarial buyer with known changing rate $\eps$}
\label{alg:adv-knownfixedeps-symmetric}
\begin{algorithmic}
 \STATE $\ell_1\leftarrow 0$
 \STATE $r_1\leftarrow 1$
 \FOR{each time step $t$}{
 \STATE Price at $p_t\leftarrow\frac{r_t+\ell_t}{2}$
 \IF{the current value $v_t<p_t$}{
    \STATE $\ell_{t+1}\leftarrow\max(0,\ell_t-\eps)$
    \STATE $r_{t+1}\leftarrow \min(1,p_t+\eps)$
 }
 \ELSE{
    \STATE $\ell_{t+1}\leftarrow \max(0,p_t-\eps)$
    \STATE $r_{t+1}\leftarrow \min(1,r_{t}+\eps)$
 }
 \ENDIF
 }
 \ENDFOR
\end{algorithmic}
\end{algorithm}
At time $t$ we have a confidence interval $[\ell_t,r_t]\ni v_t$ that contains the true value of the buyer, since the buyer's value only changes by at most $\eps$. If the buyer can afford to pay $p_t=\frac{\ell_t+r_t}{2}$, i.e. $v_{t}\geq p_t$, it means that $v_{t+1}\in[p_t-\eps,r_t+\eps]$; otherwise, it means that $v_{t+1}\in[\ell_t-\eps,p_t+\eps]$. In both cases, we almost halve the length of the confidence interval. Let $a_t$ be the length of the confidence interval at time $t$. Since the price is always the middle point of the interval, the symmetric loss of each time step is at most $\frac{a_t}{2}$. As $a_t$ starts with 1 and almost halves in each step until $a_t=O(\eps)$, we have $\frac{1}{2}\sum_{t}a_t=O(T\eps)$, which upper bounds the total symmetric loss of all time steps in the algorithm.

\end{proof}

Next, we extend the algorithm for symmetric loss to an optimal algorithm for revenue loss.

\begin{theorem}\label{thm:adv-knownfixedeps-pricing}
If the buyer has adversarial values and a fixed changing rate $\epsilon$ that is known to the seller, there exists an online pricing algorithm with revenue loss $\tilde{O}(\e^{1/2})$. Further, no online algorithm can obtain a revenue loss less than $\Omega(\e^{1/2})$. 
\end{theorem}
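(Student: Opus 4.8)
There are two things to prove: an algorithm with average revenue loss $\tilde O(\e^{1/2})$, and a matching $\Omega(\e^{1/2})$ lower bound. The algorithm is the exploit/explore scheme sketched in Step 1 of the techniques overview; the lower bound is an oblivious randomized ``tent'' construction whose crux is that binary feedback hides the hidden direction until the algorithm deliberately probes.

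\textbf{Upper bound.} I would keep the confidence-interval $[\ell_t,r_t]\ni v_t$ exactly as in Algorithm~\ref{alg:adv-knownfixedeps-symmetric}, with the conservative $\pm\e$ updates that maintain $v_t\in[\ell_t,r_t]$ by induction. Fix a threshold $\delta=\Theta(\e^{1/2})$ and let $a_t=r_t-\ell_t$. The algorithm runs in two modes: (i) if $a_t\le\delta$ (\emph{exploit}) post $p_t=\ell_t$, which guarantees a sale, has instantaneous revenue loss $v_t-\ell_t\le a_t\le\delta$, and enlarges the interval to length at most $a_t+2\e$; (ii) if $a_t>\delta$ (\emph{explore}) post the midpoint $p_t=(\ell_t+r_t)/2$, which makes the new length $a_t/2+2\e$ and has instantaneous loss at most $1$. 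Starting from $[0,1]$ the algorithm spends $O(\log(1/\delta))$ explore steps to first push the length below $\delta$ — a one-time cost contributing $o(1)$ to the average. After that it alternates: an exploit run grows the interval from $\le\delta/2+O(\e)$ up to just over $\delta$ at a rate of at most $2\e$ per step, hence lasts $\Omega(\delta/\e)$ steps; then a single explore step halves it back below $\delta$ (valid since $\e\ll\delta$). Amortizing, the total exploit loss is at most $T\delta$, and there are $O(T\e/\delta)$ explore steps outside the initial phase, each costing $O(1)$; so the average loss is $O(\delta+\e/\delta)+o(1)$, which is $O(\e^{1/2})$ for $\delta=\Theta(\e^{1/2})$. (If one instead re-tightens the interval all the way down to $\Theta(\e)$ each phase, the explore cost per phase becomes $O(\log(1/\e))$ and one gets the stated $\tilde O(\e^{1/2})$; either way the theorem follows.)

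\textbf{Lower bound.} I would use an oblivious random instance made of independent ``tents.'' Partition time into epochs of length $B=\lceil\e^{-1/2}\rceil$ (taken even). In epoch $i$ draw $d_i\in\{+1,-1\}$ uniformly and independently, pin $v$ to $1/2$ at the epoch endpoints, and inside the epoch ramp $v$ by $d_i\e$ per step up to the apex $1/2+d_i\sqrt{\e}/2$ and symmetrically back down; this respects $|v_{t+1}-v_t|\le\e$ and keeps $v_t\in[1/4,3/4]$, so every no-sale costs at least $1/4$. The key claim is that every algorithm incurs expected loss $\Omega(1)$ inside each epoch. Condition on the history at the start of epoch $i$; then $d_i$ is still uniform. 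Let $v_t^{\downarrow}$ and $v_t^{\uparrow}$ be the value at step $t$ under $d_i=-1$ and $d_i=+1$, and let $\kappa$ be the first step of the epoch at which the algorithm posts $p_t>v_t^{\downarrow}$. For every $t<\kappa$ the algorithm gets a sale regardless of $d_i$ (since $p_t\le v_t^{\downarrow}\le v_t$), so the feedback before step $\kappa$ is identical in both worlds and $\kappa$ — together with $p_1,\dots,p_\kappa$ in the epoch — is a deterministic function of the pre-epoch history, independent of $d_i$. Now split on $\kappa$: if $\kappa<\infty$, then on $\{d_i=-1\}$ step $\kappa$ is a no-sale (as $v_\kappa=v_\kappa^{\downarrow}<p_\kappa$), costing $\ge 1/4$; if $\kappa=\infty$ the algorithm prices $\le v_t^{\downarrow}$ all epoch, so on $\{d_i=+1\}$ it loses $\sum_t (v_t^{\uparrow}-v_t^{\downarrow})=\Omega(\e B^2)=\Omega(1)$. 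In either case the conditional expected epoch loss is $\Omega(1)$. Summing over the $\Theta(T\sqrt{\e})$ epochs gives expected total loss $\Omega(T\sqrt{\e})$, so some realization of $(d_i)_i$ has average loss $\Omega(\sqrt{\e})$; averaging also over the algorithm's internal coins handles randomized algorithms.

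\textbf{Main obstacle.} The upper bound is routine once the exploit/explore alternation and its amortization are written out. The delicate point is the lower bound — specifically the stopping-time argument: the reason it works is that before the first probe the binary feedback cannot distinguish $d_i=+1$ from $d_i=-1$, so $\kappa$ is independent of the hidden direction; and the tent geometry (endpoints pinned at $1/2$, amplitude $\Theta(\sqrt\e)$) is chosen precisely so that $v_t$ never leaves $[1/4,3/4]$ and consecutive epochs are genuinely independent, sparing us any reflection bookkeeping.
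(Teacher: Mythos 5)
Your proposal is correct and follows essentially the same route as the paper: the upper bound is the same confidence-interval scheme (price at the lower endpoint, re-tighten by midpoint steps once the length exceeds $\Theta(\epsilon^{1/2})$, amortize explore cost against exploit runs of length $\Omega(\epsilon^{-1/2})$), and the lower bound is the same per-phase dichotomy over blocks of length $\epsilon^{-1/2}$ — binary feedback is identical in both worlds until the first price exceeds the decreasing trajectory, after which either a no-sale costs $\Omega(1)$ with probability $1/2$ or the foregone revenue under the increasing trajectory sums to $\Omega(\epsilon B^2)=\Omega(1)$. Your only deviations are cosmetic and benign: the tent instance pinned at $1/2$ at epoch boundaries replaces the paper's monotone ramps whose endpoints perform a $\sqrt{\epsilon}$-step random walk (thereby avoiding the paper's ``with constant probability $v_{t_0}\ge 1/2$'' step and any boundary bookkeeping), and you gloss over the degenerate regime $\epsilon<1/T$, which the paper handles by truncating $\epsilon$ at $1/T$.
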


\begin{proof}
We assume that $\eps\geq\frac{1}{T}$. Otherwise, we can run the algorithm with $\eps'=\frac{1}{T}$ and get revenue loss $\tilde{O}((\eps')^{1/2})=o(1)=\tilde{O}(\eps^{1/2})$.

The idea of the algorithm is as follows. The sale process starts with the seller not knowing the initial value $v_1$ of the buyer. However, since the buyer's value only changes by at most $\eps$ in each step, the seller can quickly locate the buyer's value within $O(\eps)$ error by binary search.
Such a small confidence interval that contains the buyer's current value extends by $\eps$ in both upper bound and lower bound after each step. We propose an algorithm that repeatedly prices at the bottom of the confidence interval and re-locates the buyer's current value whenever the confidence interval becomes too wide. 

Firstly, we have the following building-block algorithm that quickly returns the range of the buyer's value with additive accuracy $O(\eps)$. In the algorithm, timestamp $t$ increases by one after any pricing query. 

\begin{algorithm}[H]
\caption{Locating the value of a buyer with changing rate $\eps$ to an interval of length $a\leq 6\eps$}
\label{alg:binarysearch}
\begin{algorithmic}
 \STATE {\bfseries Input:} current confidence interval $[\ell_t,r_t]$
 \STATE Run Algorithm~\ref{alg:adv-knownfixedeps-symmetric} until confidence interval $[\ell_t,r_t]$ satisfies $r_t-\ell_t<4\eps$
 \STATE \textbf{Return}  {$[\ell_t-\eps,r_t+\eps]$}
\end{algorithmic}
\end{algorithm}
The algorithm repeatedly does binary search until the confidence interval has length $O(\eps)$. If at the beginning we have a confidence interval of length $b$ and finally we have a confidence interval of length $a$, thus the total number of steps needed is $O(\log\frac{b}{a})$, and the total loss of the algorithm is $O(\log\frac{b}{a})$ since the loss of each step is at most 1.

Next, we present the entire pricing algorithm for an adversarial buyer with changing rate $\eps$ using Algorithm~\ref{alg:binarysearch} as a building block.

\begin{algorithm}[H]
\caption{Revenue loss minimizing algorithm for adversarial buyer with known changing rate $\eps$}
\label{alg:adv-knownfixedeps-pricing}
\begin{algorithmic}
\FOR{each phase $[t_0+1,t_0+m]$ of length $m=\eps^{-1/2}$}{
    \STATE Apply Algorithm~\ref{alg:binarysearch} to locate the current value of the buyer in an interval $[\ell_t,r_t]$ with length $\sqrt{\eps}$
    \FOR{each step $t$ in the phase}{
        \STATE Price at $p_t\leftarrow \ell_t$
        \STATE $[\ell_{t+1},r_{t+1}]\leftarrow[\ell_{t}-\eps,r_t+\eps]$
    }
    \ENDFOR
 }
 \ENDFOR
\end{algorithmic}
\end{algorithm}

Now we analyze the revenue loss of the algorithm. In each phase of the algorithm with $m$ time steps, we first initialize and locate the value of the buyer to a small range $\sqrt{\eps}$. The revenue loss incurred by Algorithm~\ref{alg:binarysearch} is $O(\log\frac{1}{\eps})={\tilde O(1)}$ in each phase (actually $O(1)$ after the first phase since the length of confidence interval only needs to shrink by constant fraction). Then we price at the bottom of the confidence interval for $O(\sqrt{\eps})$ steps for $m=\sqrt{\eps}$ steps. Since the confidence interval $[\ell_t,r_t]$ expands by $2\eps$ each time, it is equivalent to say we wait until the confidence interval has length $3\sqrt{\eps}$. The revenue loss from each step is $\leq 3\sqrt{\eps}$ since the item is bought by the buyer every time, thus $O(\sqrt{\eps})\cdot 3\sqrt{\eps}=O(\eps)$ in the entire phase; then we use binary search to narrow the confidence interval by $\frac{1}{2}$, which has revenue loss $O(1)$ since it takes only $O(1)$ steps in Algorithm~\ref{alg:binarysearch}. Each phase takes $\Theta(\sqrt{\eps})$ steps with loss $O(\eps)$, therefore there are $O(T\eps^{-1/2})$ phases in total with loss $O(T\eps^{-1/2})\cdot O(\eps)=O(T\eps^{1/2})$. Thus the total loss of the algorithm is $\tilde{O}(T\eps^{1/2})$, with average revenue loss $\tilde{O}(\eps^{1/2})$. 

We show that such loss is tight, that no online algorithm can obtain a revenue loss $o(\eps^{1/2})$. By Yao's minimax principle, to reason that there exists some adversarial buyer such that no randomized online algorithm can get revenue loss $o(\eps^{1/2})$, it suffices to show that there exists a random adversarial buyer such that no online deterministic algorithm can get such low revenue loss. The buyer's value sequence is predetermined as follows. In the beginning, the buyer has value $v_0=\frac{1}{2}$. The entire time horizon $[T]$ is partitioned to $T\sqrt{\eps}$ phases, each with length $\eps^{-1/2}$. In each phase starting with time $t_0+1$ and ending with time $t_0+\eps^{-1/2}$, the values of the buyer form a monotone sequence: with probability $\frac{1}{2}$, $v_{t_0+j}=v_{t_0}+j\eps$, $\forall j\in [\eps^{-1/2}]$; with probability $\frac{1}{2}$, $v_{t_0+j}=v_{t_0}-j\eps$, $\forall j\in [\eps^{-1/2}]$. For any deterministic algorithm with price $p_t$ at each time, when any phase begins, the algorithm needs to decide the pricing strategy without knowing which value instance of the buyer will be realized. Let $\hat{t}\in[t_0+1,t_0+\eps^{-1/2}]$ be the first time step in the phase such that $v_{t_0}-(t-t_0)\eps<p_{t}$. 

If such $\hat{t}$ exists, then the revenue loss at time $\hat{t}$ is $v_{t_0}-(\hat{t}-t_0)\eps=v_{t_0}-O(\eps^{1/2})$ when the values of the buyer decrease in the phase, and this case happens with probability $\frac{1}{2}$. Then in expectation the revenue loss in this phase is $\Omega(v_{t_0})-O(\eps^{1/2})$. Notice that the starting value $v_{t_0}$ of each phase form a unbiased random walk sequence with step size $\eps^{1/2}$, since the buyer starts with $v_0=\frac{1}{2}$, therefore with constant probability $v_{t_0}\geq\frac{1}{2}$. Thus we can also claim that the expected revenue loss in the phase is $\Omega(1)$. 
If such $\hat{t}$ does not exist, it means that the algorithm has identical information from both instances of the buyer in the entire phase, since the buyer can always afford to purchase the item. As $p_{t}\leq v_{t_0}-(t-t_0)\eps$ for each time $t$, the revenue loss of the algorithm when the values of the buyer increase in the phase is at least 
    $\sum_{t_0<t\leq t+\eps^{-1/2}}(v_{t_0}+(t-t_0)\eps-p_t)
    \geq\sum_{t_0<t\leq t+\eps^{-1/2}}2(t-t_0)\eps=\Omega(1)$.

Therefore in both cases, the revenue loss in the phase is $\Omega(1)$ in this phase for any deterministic algorithm. As there are $T\eps^{1/2}$ phases, the total revenue loss from all phases is at least $\Omega(T\eps^{1/2})$. Thus no randomized algorithm can get average revenue loss $o(\eps^{1/2})$.

\end{proof}

If the buyer's value evolves stochastically across time, the stochasticity helps us incur less revenue loss. To be more specific, when the buyer's value is stochastic and forms a martingale, after every $\eps^{-2/3}$ steps, although the buyer's value can change by as large as $\eps^{-2/3}\cdot\eps=\eps^{1/3}$, with high probability the buyer's value only changes by $\eps^{2/3}$. Thus compared to Algorithm~\ref{alg:adv-knownfixedeps-pricing}, we can extend each phase's length while maintaining a shorter confidence interval. We state the algorithm and the revenue loss below.

\begin{algorithm}[H]
\caption{Revenue loss minimizing algorithm for stochastic buyer with known changing rate $\eps$}
\label{alg:rand-knownfixedeps-pricing}
\begin{algorithmic}
 \STATE $[\ell,r]\leftarrow [0,1]$
 \FOR{each phase $[t_0+1,t_0+m]$ of length $m=\eps^{-2/3}$}{
    \STATE Apply Algorithm~\ref{alg:binarysearch} to narrow down the confidence interval $[\ell,r]$ to length $6\eps$
    \FOR{each step $t$ in the phase}{
         \STATE Price at $p_t\leftarrow\ell-4\eps^{2/3}\sqrt{\log \frac{1}{\eps}}$
    }
    \ENDFOR
 }
 \ENDFOR
\end{algorithmic}
\end{algorithm}

\begin{theorem}\label{thm:rand-knownfixedeps-pricing}
If the buyer has stochastic value and a fixed changing rate $\epsilon$ that is known to the seller, Algorithm~\ref{alg:rand-knownfixedeps-pricing} has revenue loss $\tilde{O}(\e^{2/3})$. Furthermore, no online algorithm can obtain a revenue loss less than $\Omega(\e^{2/3})$.
\end{theorem}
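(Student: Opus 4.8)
The plan is to prove the upper bound by analyzing Algorithm~\ref{alg:rand-knownfixedeps-pricing}, and the lower bound via an (almost) unbiased random walk together with a windowing argument in the spirit of the lower bound in Theorem~\ref{thm:adv-knownfixedeps-pricing}.

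\textbf{Upper bound.} As before, we may assume $\eps \ge 1/T$: otherwise run the algorithm with $\eps' = 1/T \ge \eps$, which keeps every confidence interval and every concentration estimate valid (the true walk moves no more than an $\eps'$-walk) while making the bound $o(1) = \tilde O(\eps^{2/3})$. Fix a phase and let $\tau$ be the first of its $m=\eps^{-2/3}$ exploit steps, so Algorithm~\ref{alg:binarysearch} has returned $[\ell,r]\ni v_\tau$ with $r-\ell\le 6\eps$ and the seller posts the fixed price $p=\ell-4\eps^{2/3}\sqrt{\log(1/\eps)}$ at every exploit step. I would introduce the good event $G=\{\,|v_{\tau+j}-v_\tau|\le 3\eps^{2/3}\sqrt{\log(1/\eps)}\text{ for }0\le j<m\,\}$. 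Since $v_\tau,v_{\tau+1},\dots$ is a bounded martingale with increments in $[-\eps,\eps]$ and $m\eps^2=\eps^{4/3}$, Azuma's inequality and a union bound over the $m$ steps give $\P[G^c]\le \eps^2$ (say), because the deviation $3\eps^{2/3}\sqrt{\log(1/\eps)}$ is several standard deviations. On $G$ every exploit step is a sale, as $v_{\tau+j}\ge\ell-3\eps^{2/3}\sqrt{\log(1/\eps)}>p$, and its revenue loss $v_{\tau+j}-p$ is at most $(r-\ell)+7\eps^{2/3}\sqrt{\log(1/\eps)}=O(\eps^{2/3}\sqrt{\log(1/\eps)})$; over $m=\eps^{-2/3}$ steps this totals $O(\sqrt{\log(1/\eps)})=\tilde O(1)$. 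On $G^c$ the exploit loss is crudely at most $m=\eps^{-2/3}$, contributing $\le \eps^2\cdot\eps^{-2/3}=\eps^{4/3}$ in expectation. Finally, after the exploit steps the value has moved by at most $m\eps=\eps^{1/3}$, so Algorithm~\ref{alg:binarysearch} at the start of the next phase is run from a valid interval of length $O(\eps^{1/3})$ (or from $[0,1]$) and costs $O(\log(1/\eps))=\tilde O(1)$ steps. Hence the expected loss per phase is $\tilde O(1)$, and over the $T/m=T\eps^{2/3}$ phases the total is $\tilde O(T\eps^{2/3})$, i.e.\ average loss $\tilde O(\eps^{2/3})$. (The exponent $2/3$ is forced by balancing: pricing $\approx\eps\sqrt m$ below costs $\approx\eps m^{3/2}$ per phase, hence $\approx T\eps\sqrt m$ overall, while re-location costs $\approx T/m$; these match at $m\approx\eps^{-2/3}$.)

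\textbf{Lower bound, set-up.} The adversary is the $\pm\eps$ random walk confined to $[1/4,3/4]$: $v_1\sim\mathrm{Unif}[3/8,5/8]$, and $v_{t+1}-v_t$ is uniform on $\{-\eps,+\eps\}$ while $v_t\in[1/4,3/4]$ and $0$ otherwise. This is a bounded martingale with changing rate $\eps$, and a standard reflection/stopping argument confines it to $[1/4,3/4]$ over the horizon except with lower-order probability, which I treat as a technicality as in the earlier theorems (and one may as well restrict to $T=O(\eps^{-2})$, the regime in which an $\eps$-sized changing rate is actually sustainable by a bounded martingale). Since the adversary is oblivious it suffices to lower bound the expected loss of an arbitrary \emph{deterministic} algorithm (for $T<\eps^{-2/3}$ the claim is immediate, as even the first round incurs $\Omega(1)$ expected revenue loss against the random $v_1$, so $\opt-\rev=\Omega(1)=\Omega(T\eps^{2/3})$). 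For $T\ge\eps^{-2/3}$, partition $[T]$ into windows of length $L=\eps^{-2/3}$; I will show each window contributes $\Omega(1)$ to the expected revenue loss, which sums to $\Omega(T\eps^{2/3})$. The crucial structural point is that, because the algorithm is deterministic, the prices it would post inside a window \emph{as long as every step is a sale} form a fixed sequence $\bar p_1,\dots,\bar p_L$ (no branching occurs before the first no-sale), so the event ``some step of the window is a no-sale'' coincides with $\{\exists j:\ v_{t_0+j}<\bar p_j\}$ — a property of the walk against a \emph{fixed} barrier, with no conditioning on the algorithm's adaptivity.

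\textbf{Lower bound, the window argument.} Fix a window $W$, condition on the history before it, and write $\Delta_j=v_{t_0+j}-v_{t_0}$. If $\bar p_{j_0}>v_{t_0}-c\eps\sqrt{j_0}$ for some step $j_0$ (for a small universal constant $c$), then $\P[\Delta_{j_0}<\bar p_{j_0}-v_{t_0}]\ge\P[\Delta_{j_0}<-c\eps\sqrt{j_0}]$, which is bounded below by a universal constant for every $j_0\ge1$ (anti-concentration of the simple random walk: being below a small fraction of a standard deviation has constant probability — small $j_0$ by inspection, large $j_0$ by the CLT/Berry--Esseen). Hence some step of $W$ is a no-sale with constant probability, and since a no-sale at a step with $v_t\ge1/4$ loses at least $1/4$, $W$ loses $\Omega(1)$ in expectation. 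Otherwise $\bar p_j\le v_{t_0}-c\eps\sqrt j$ for all $j$: if the event ``no no-sale occurs in $W$'' has probability $<1/2$ we conclude as above, and if it has probability $\ge1/2$ then on that event every step is a sale, the prices are exactly $\bar p_j$, and the window's loss is $\sum_j(v_{t_0+j}-\bar p_j)\ge\sum_j(\Delta_j+c\eps\sqrt j)$. Taking expectations and observing that ``no no-sale'' $=\bigcap_j\{\Delta_j\ge\bar p_j-v_{t_0}\}$ is an increasing event while each $\Delta_j$ is an increasing function of the i.i.d.\ increments (so $\E[\Delta_j\mid\text{no no-sale}]\ge\E[\Delta_j]=0$ by the FKG/Harris inequality), the expected loss is at least $\tfrac12\sum_{j=1}^{L}c\eps\sqrt j=\Omega(\eps L^{3/2})=\Omega(1)$. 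Summing over the $T/L=T\eps^{2/3}$ windows yields $\Omega(T\eps^{2/3})$ total, i.e.\ average loss $\Omega(\eps^{2/3})$.

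\textbf{Where the work is.} On the upper-bound side the only subtlety is the martingale concentration: one must price exactly $\Theta(\eps^{2/3}\sqrt{\log(1/\eps)})$ below the confidence interval so that Azuma's inequality forces a sale at essentially every step of an $\eps^{-2/3}$-length phase while the per-step loss stays $\tilde O(\eps^{2/3})$. The main obstacle is the lower bound, and within it the two quantitative random-walk facts above: the uniform-in-$\eps$ anti-concentration estimate $\P[\Delta_j<-c\eps\sqrt j]=\Omega(1)$ for all $j\ge1$ (choosing $c$ small and handling the discreteness of the $\pm\eps$ walk at small $j$), and the confinement of the martingale to $[1/4,3/4]$ so that downward excursions and no-sales are uniformly costly. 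The ``non-adaptive until the first no-sale'' reduction together with the FKG step is what neutralizes the algorithm's adaptivity; making these interlock cleanly, alongside the right window length $L=\eps^{-2/3}$, is the heart of the argument.
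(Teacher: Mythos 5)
Your proposal is correct. The upper bound is essentially the paper's own analysis of Algorithm~\ref{alg:rand-knownfixedeps-pricing}: binary search to a $6\eps$-interval, pricing $\Theta(\eps^{2/3}\sqrt{\log(1/\eps)})$ below it, Azuma plus a union bound over the $\eps^{-2/3}$-step phase, a crude bound on the rare bad event, and $\tilde O(1)$ re-location cost, giving $\tilde O(1)$ per phase. The lower bound shares the paper's skeleton --- windows of length $\Theta(\eps^{-2/3})$, the observation that the seller learns nothing before the first no-sale so the within-window prices form a fixed sequence, and the dichotomy ``some price within $c\eps\sqrt{j}$ of the starting value'' (constant-probability no-sale, loss $\Omega(1)$) versus ``all prices at least $c\eps\sqrt{j}$ below'' (loss $\Omega(\eps\sqrt{j})$ per step) --- but your execution of that second case is genuinely different. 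The paper passes to a more powerful seller (told $v_{t_0}$ at the start of each phase, phase ending at the first no-sale) and then bounds $\E[v_t-p_t\mid v_t\geq p_t]=\Omega(\eps\sqrt m)$, a single-step conditioning even though the loss $\sum_t(v_t-p_t)$ is only collected on the event that the entire phase survives; you stay in the original game, condition on the pre-window history to get a fixed barrier, and invoke Harris/FKG (survival is an increasing event in the i.i.d.\ increments, each $\Delta_j$ is increasing, so $\E[\Delta_j\mid\text{survival}]\geq 0$), which makes rigorous exactly the conditioning step the paper glosses over; your confinement to $[1/4,3/4]$ likewise makes every no-sale cost $\Omega(1)$ where the paper implicitly assumes the walk stays near $1/2$. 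What each buys: the paper's powerful-seller phase game is shorter and makes the no-learning claim immediate; yours avoids granting extra power and is tighter on the survival conditioning. The one shared technicality is the horizon $T\gg\eps^{-2}$: an unconfined $\pm\eps$ walk exits $[0,1]$, and a walk stopped or pinned at a boundary loses the anti-concentration your case analysis needs, so ``confined except with lower-order probability'' is not literally accurate there; one should either restrict to $T=O(\eps^{-2})$ as you suggest, or use a mean-zero boundary rule and note that all but a small fraction of windows start well inside the interval --- the paper's construction leaves the same point unaddressed, so this is a shared caveat rather than a gap specific to your argument.
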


\begin{proof}
We assume that $\eps\geq\frac{1}{T}$. Otherwise, we can run the algorithm with $\eps'=\frac{1}{T}$ and get revenue loss $\tilde{(\eps')^{2/3}}=o(1)=\tilde{O}(\eps^{2/3})$.

When the buyer's value is stochastic and forms a martingale, after every $\eps^{-2/3}$ steps, although the buyer's value can change by as large as $\eps^{-2/3}\cdot\eps=\eps^{1/3}$, with high probability the buyer's value only changes by $\eps^{2/3}$. Thus compared to Algorithm~\ref{alg:adv-knownfixedeps-pricing}, we can extend the length of each phase, while maintaining shorter confidence interval. The detailed algorithm is as follows. 

\begin{algorithm}[htb]
\caption{Revenue loss minimizing algorithm for stochastic buyer with known changing rate $\eps$}
\begin{algorithmic}
 \STATE $\ell\leftarrow 0$
 \STATE $r\leftarrow 1$
 \FOR{each phase $[t_0+1,t_0+m]$ of length $m=\eps^{-2/3}$}{
    \STATE Apply Algorithm~\ref{alg:binarysearch} to narrow down the confidence interval $[\ell,r]$ to length $6\eps$
    \FOR{each step $t$ in the phase}{
         \STATE Price at $p_t\leftarrow\ell-4\eps^{2/3}\sqrt{\log \frac{1}{\eps}}$
    }
    \ENDFOR
 }
 \ENDFOR
\end{algorithmic}
\end{algorithm}

Now we analyze the revenue loss of the algorithm. In each phase $[t_0+1,t_0+m]$ with $m=\eps^{-2/3}$, firstly the binary search algorithm is called with a revenue loss of $O(\log\frac{1}{\eps})=\tilde{O}(1)$. Suppose that at time $t_1$ the search algorithm ends, and we get an estimate of $v_{t_1}$ with additive accuracy $6\eps$. Let $\delta=4\eps^{2/3}\sqrt{\log \frac{1}{\eps}}$. In the remaining of the phase, the price is fixed to be $\ell-\delta$. For any $t\in[t_1,t_0+m]$, by Azuma's inequality, the probability that $v_{t}\not\in[v_{t_1}-\delta,v_{t_1}+\delta]$ is
\begin{equation}\label{eqn:Azmua}
\Pr[v_{t}\not\in[v_{t_1}-\delta,v_{t_1}+\delta]]\leq 2\exp\left(-\frac{\delta^2}{2m\eps^2}\right)<\frac{1}{m^2}.\end{equation}
By union bound, the probability that exists $t\in[t_1,t_0+m]$ such that $v_{t}\not\in[v_{t_1}-\delta,v_{t_1}+\delta]$ is at most $\frac{1}{m}$; in this case, the revenue loss in the phase is at most 1 per step thus $m$ in total, so the expected revenue loss contributed from the case is $O(1)$. If $v_{t}\in[v_{t_1}-\delta,v_{t_1}+\delta]$ for every $t\in[t_1,t_0+m]$, the revenue loss for setting price $\ell-\delta\geq v_{t}-6\eps-\delta$ is at most $O(\eps+\delta)$ for each step, thus $mO(\eps+\delta)=O(\sqrt{\log\frac{1}{\eps}})=\tilde{O}(1)$ in total. Combine both cases above, we know that the expected revenue loss in each phase is $\tilde{O}(1)$, thus $\tilde{O}(T\eps^{2/3})$ in total and $\tilde{O}(\eps^{2/3})$ revenue loss on average since there are $\frac{T}{m}=T\eps^{2/3}$ phases.

Now we show that such loss is tight, even when the buyer's values form an unbiased random walk with step size $\eps$ and starting value $\frac{1}{2}$.

Consider a game with multiple phases and the following more powerful seller. Each phase has at most $2m=2\e^{-2/3}$ time steps. At the beginning of each phase, the seller is told the exact value $v_0$ of the buyer at the current time. Then at each time step $t\in[1,2m]$, the current value $v_t$ increases or decreases by $\e$ with equal probability, and the seller sets a price $p_t$. Whenever $p_t>v_t$, the phase terminates and incurs a loss $\ell=v_t$. If $p_t<v_t$ for the entire phase, the loss $\ell$ of the phase is defined by $\sum_{t\leq 2m}\ell_t=\sum_{t\leq 2m}(v_t-p_t)$. Then the game goes to the next phase, with $v_0$ reset to $v_{2m}\pm\e$. The entire game ends when the total number of time steps in each phase sum up to $T$.

We notice that in the game, the seller is strictly more powerful, while the loss does not increase. In each phase before the phase terminates, the seller learns no information from the pricing. Thus without loss of generality, assume that given initial value $v_0$ of the buyer, the seller proposes a sequence of deterministic prices (randomness won't help) $p_1,p_2,\cdots,p_{2m}$, and then the values $v_1,\cdots,v_{2m}$ are realized randomly.

We do not count the loss contributed from the first $m$ steps. Since Chernoff bound is tight up to constant factor for $U\{-1,1\}$ random variables, i.e. $\Pr[v_t<v_0-\e\sqrt{t}]\geq\Omega(1)$ for $t\geq m$. Thus for any $t\geq m$ if $p_t\geq v_0-\e\sqrt{m}$, with constant probability the phase stops at time $t$ with loss $v_t$, which is $\Omega(1)$ in expectation. If $p_t<v_0-\e\sqrt{m}$, $\E[v_t-p_t|v_t\geq p_t]=\Omega(\e\sqrt{m})=\Omega(\e^{2/3})$ since with probability $\frac{1}{2}$, $v_t\geq v_0$. Thus the loss of the phase is at least $m\Omega(\e^{2/3})=\Omega(1)$ in expectation if the phase ends at step $2m$.

The above analysis shows that at each phase, the expected loss is $\Omega(1)$. Since there are at least $\frac{T}{m}=T\e^{2/3}$ phases, the total loss is at least $\Omega(T\e^{2/3})$ in expectation. This finishes the proof of no randomized algorithm can get revenue loss $o(\eps^{2/3})$.

\end{proof}

\section{Buyer's Changing Rate is Fixed, Unknown}\label{sec:unknown_constant_eps}
In this section, we consider the case where the buyer's value changes by fixed rate $\eps_t=\epsilon$ that is unknown to the seller. As a warm-up, we first study the symmetric loss obtained by online prices. Such a problem lets us understand better how to deal with the unknown rate of value change, and the pricing algorithm can be extended to the case of revenue loss.

\begin{theorem}\label{thm:adv-unknownfixedeps-symmetric}
If the buyer has adversarial values and a fixed changing rate $\epsilon$ unknown to the seller, there exists an online pricing algorithm with a symmetric loss $\tilde{O}(\epsilon)$. Further, no online algorithm can obtain a symmetric loss less than $\Omega(\epsilon)$.
\end{theorem}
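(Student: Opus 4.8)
First I would dispose of the lower bound by reusing the instance from Theorem~\ref{thm:adv-knownfixedeps-symmetric}: the unbiased $\pm\eps$ random walk started at $\tfrac12$ does not reveal $\eps$ to the seller, and the argument there already shows that \emph{every} online algorithm --- even one that is told $\eps$ and the full past value sequence --- incurs expected symmetric loss $\Omega(\eps)$ per step, since any posted price is at distance $\ge\eps$ from one of the two equally likely next values. Since the problem only gets harder when $\eps$ is unknown, the $\Omega(\eps)$ bound carries over with no extra work.

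For the upper bound I would wrap Algorithm~\ref{alg:adv-knownfixedeps-symmetric} in a ``guess-and-double'' scheme. Assume $\eps\ge 1/T$ (otherwise run the whole scheme with the hard-coded constant $1/T$, which already gives average loss $o(1)=\tilde O(\eps)$). Maintain a guess $\heps$, initialized to $1/T$, and run Algorithm~\ref{alg:adv-knownfixedeps-symmetric} with step $\heps$ in place of $\eps$, so at every step there is a confidence interval $[\ell_t,r_t]$ that is \emph{supposed} to contain $v_t$. The extra ingredient is a constant fraction of \emph{checking rounds} (say every other round), alternating between a ``low check'', where we post $p_t=\ell_t$, and a ``high check'', where we post $p_t=r_t$; on the remaining rounds we keep pricing at the midpoint of $[\ell_t,r_t]$ exactly as in Algorithm~\ref{alg:adv-knownfixedeps-symmetric}, so on those rounds the width still obeys the same $a\mapsto a/2+2\heps$ recurrence and therefore stays $\Theta(\heps)$ once it has shrunk. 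The point of a low check is detection: if it fails to sell we have certified $v_t<\ell_t$, which is impossible when $\heps\ge\eps$, so we conclude $\heps<\eps$, set $\heps\leftarrow 2\heps$, reset the interval to $[0,1]$, and start a new epoch; a successful high check (certifying $v_t\ge r_t$) triggers the same doubling. Crucially --- and this is why checking can be afforded at a constant rate here, unlike in the revenue-loss setting --- both $\ell_t$ and $r_t$ lie in an interval of width $O(\heps)$, so a checking round costs only $O(\heps)$ symmetric loss.

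The analysis then has three moving parts. \textbf{(i)} The guess $\heps$ starts at $1/T$, only ever doubles, and is capped at $1$ (it is a rate of change of values in $[0,1]$), so there are at most $\log_2 T$ doublings and hence $O(\log T)$ epochs. \textbf{(ii)} Each epoch opens with a binary-search sub-phase --- running Algorithm~\ref{alg:adv-knownfixedeps-symmetric} until the width falls to $\Theta(\heps)$ --- taking $O(\log(1/\heps))=O(\log T)$ rounds of loss at most $1$; over all $O(\log T)$ epochs this contributes only $O(\log^2 T)$ total loss, i.e.\ $o(1)$ on average. \textbf{(iii)} In the remainder of the epoch (the ``steady sub-phase'') I would prove the key invariant $v_t\in[\ell_t-O(\eps),\,r_t+O(\eps)]$: if $v_t$ were, say, more than $8\eps$ below $\ell_t$, then since values move by at most $\eps$ per round and $\ell_t$ moves by at most $\heps\le\eps$ per round, the value would also have been below the then-current left endpoint at the most recent low-check round, which would have fired and ended the epoch --- a contradiction (boundary clipping at $0$ and $1$ only helps, since there $v_t$ cannot escape and the loss is $O(\heps)$ anyway). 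Given the invariant, and since the interval has width $O(\heps)=O(\eps)$, every posted price (midpoint, $\ell_t$, or $r_t$) is within $O(\eps)$ of $v_t$, so the steady sub-phases contribute $O(T\eps)$. Adding the pieces gives symmetric loss $O(T\eps)+O(\log^2 T)=\tilde O(T\eps)$, i.e.\ average $\tilde O(\eps)$.

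The part I expect to be most delicate is exactly the one Step~2 of our technique overview is designed around: detecting that the guess is too small \emph{on the high side}. Posting $r_t$ to test ``$v_t\ge r_t$'' is a weak test, because when $\heps\ge\eps$ one genuinely has $v_t\le r_t$, with equality a knife-edge event; I would handle this with a small tie-breaking buffer (or by noting that for generic value paths the event has measure zero, and that in any case a few spurious doublings only cost $O(\log T)$ extra rounds since $\heps$ saturates at $1$). The second place needing care is the invariant in part~(iii): it is what rules out an adversary that keeps nudging $v_t$ across the narrow interval so as to never trip a check while accumulating loss --- the resolution being that such an adversary necessarily keeps $v_t$ within $O(\eps)$ of the interval, whereas an adversary that pushes $v_t$ genuinely far away is caught by a low or high check within $O(1)$ rounds. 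Everything else --- the width recurrence and the union over $O(\log T)$ epochs --- is routine.
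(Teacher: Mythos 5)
Your proposal follows essentially the same route as the paper: guess-and-double $\heps$ starting from $1/T$, run the known-rate confidence-interval/binary-search algorithm with the guess, interleave constant-rate low and high checking rounds priced at the interval endpoints (affordable for symmetric loss since both endpoints are within $O(\heps+\eps)$ of $v_t$), double $\heps$ on any detected inconsistency, and reuse the random-walk instance of Theorem~\ref{thm:adv-knownfixedeps-symmetric} for the $\Omega(\eps)$ lower bound, giving total loss $O(T\eps)+\mathrm{polylog}(T)$ exactly as in the paper's analysis. One remark: the tie-breaking buffer you flag is genuinely needed (the paper prices the high check at $r_t+\heps$ rather than at $r_t$); your parenthetical fallback --- ``measure zero'' or ``spurious doublings only cost $O(\log T)$ extra rounds'' --- does not hold against an adaptive adversary (e.g.\ $v_t\equiv 1$ makes an unbuffered high check fire in every epoch, permanently inflating $\heps$ to a constant and hence the per-step loss to $\Theta(\heps)$, i.e.\ $\Omega(T)$ total), so the buffer should be kept as the actual fix rather than an optional one.
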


\begin{proof}
The $\Omega(\eps)$ tightness result was shown in Theorem~\ref{thm:adv-knownfixedeps-symmetric}.
Here we propose a pricing algorithm with a symmetric loss $\tilde{O}(\epsilon)$.
Compared to the case where $\eps$ is known to the seller, the seller can use a guess $\heps$ to replace the true rate $\eps$, and run the algorithms in the known $\eps$ setting. The algorithm starts with $\heps=\frac{1}{T}$, and doubles $\heps$ whenever the algorithm detects that the changing rate of the value exceeds $\heps$.

\begin{algorithm}[htb]
\caption{Symmetric loss minimizing algorithm for adversarial buyer with unknown changing rate $\eps$}
\label{alg:adv-unknownfixedeps-symmetric}
\begin{algorithmic}
 \STATE $[\ell_1,r_1]\leftarrow [0,1]$
 \STATE $\heps\leftarrow\frac{1}{T}$
 \WHILE{$\heps<\frac{1}{2}$}{
    \FOR{each three consecutive time steps $t,t+1,t+2$}{
        \STATE Set price $p_t\leftarrow \ell_t$
        \STATE Set price $p_{t+1}\leftarrow r_t+\heps$
        \STATE Set price $p_{t+2}\leftarrow \frac{\ell_t+r_t}{2}$
        \IF{the seller finds $v_t<\ell$ or $v_{t+1}\geq r_t+\heps$}{
            \STATE $\heps\leftarrow\heps\cdot2$
            \STATE break
        }
        \ENDIF
        \IF{$v_{t+2}<p_{t+2}$}{
            \STATE $\ell_{t+3}\leftarrow \ell_t-3\heps$, $r_{t+3}\leftarrow p_{t+2}+\heps$
        }
        \ELSE{
            \STATE $\ell_{t+3}\leftarrow p_{t+2}-\heps$, $r_{t+3}\leftarrow r+3\heps$
        }
        \ENDIF
    }
    \ENDFOR
 }
 \ENDWHILE
\end{algorithmic}
\end{algorithm}

The algorithm deal with three time steps $t,t+1,t+2$ at each time, and always maintain a confidence interval $[\ell_t,r_t]$ of $v_t$ at the beginning of time step $t$, such that $\heps>\eps$, $v_{t}\in[\ell_t,r_t]$ always holds. 
Time steps $t$ and $t+1$ are ``checking steps'', to check whether the current three steps $v_{t}$ , $v_{t+1}$ and $v_{t+2}$ incur too much loss. In particular, if $\heps>\eps$, then $v_{t}\geq p_{t}$ and $v_{t+1}<p_{t+1}$ will always hold. On the other hand, if $v_{t}\geq p_{t}$, then $v_{t+1}\geq p_{t}-\eps$, $v_{t+2}\geq p_{t}-2\eps$ since the value of the buyer only changes at most $\eps$ additively, although the seller does not know the value of $\eps$. Thus $v_{t},v_{t+1},v_{t+2}\geq \ell_t-2\eps$. Similarly, if $v_{t+1}<p_{t+1}=r_t+\heps$, then $v_{t}<p_{t+1}+\eps$, $v_{t+2}<p_{t+1}+\eps$. Thus $v_{t},v_{t+1},v_{t+2}<r_t+\eps+\heps$. Since $p_{t},p_{t+1},p_{t+2}$ are all in range $[\ell_t-2\eps,r_t+\eps+\heps]$, we have for any $j\in\{t,t+1,t+2\}$, $|v_{j}-p_{j}|=O(r_t-\ell_t+\eps+\heps)$. Thus in a round of three consecutive time steps $t,t+1,t+2$, if $v_t\geq p_t$ and $v_{t+1}<p_{t+1}$, the symmetric loss from this round of three time steps is proportional to the length of confidence bound plus the true changing rate and the estimated changing rate, i.e. $O(r_t-\ell_t+\eps+\heps)$. 

To summarize, if in any three consecutive time steps the seller finds out $p_t>v_t$ or $p_{t+1}\leq v_{t+1}$, then $\heps<\eps$ and $\heps$ is doubled, and the symmetric loss obtained from time step $t,t+1,t+2$ is at most 1 each; if in any consecutive three time steps the seller observes $p_t\leq v_t$ and $p_{t+1}>v_{t+1}$, then the symmetric loss obtained from time step $t,t+1,t+2$ is $O(r_{t}-\ell_{t}+\eps+\heps)$ on average. 
Since $\heps$ can only be doubled to at most $2\eps$, we know that there are $O(\log T)$ iterations with fixed $\heps$. In each iteration, the time when $\heps$ is doubled, the symmetric loss obtained from the previous three time steps is $O(1)$; in other cases, the symmetric loss obtained from the previous three time steps is constant times the length of the confidence interval plus $O(\eps+\heps)$. Since in each phase of fixed $\heps$ the length of confidence interval starts with 1, and then decreases geometrically to $O(\heps)$, thus if the iteration contains $a$ time steps, the total length of the confidence intervals is $O(1+\heps a)=O(1+\eps a)$, and the total symmetric loss of the phase is also $O(1+\eps a)$. Thus the total symmetric loss of the algorithm is $O(\log T+T\eps)=\tilde{O}(T\epsilon)$, which is $\tilde{O}(\epsilon)$ on average.

\end{proof}

Using a similar technique of checking steps as in the case of symmetric loss, we can obtain the same revenue loss as in the setting where the seller knows the changing rate, no matter whether the buyer has adversarial or stochastic value.
\begin{theorem}\label{thm:adv-unknownfixedeps-pricing}
If the buyer has adversarial values and a fixed changing rate $\epsilon$ unknown to the seller, there exists an online pricing algorithm with revenue loss $\tilde{O}(\e^{1/2})$. Further, no online algorithm can obtain a revenue loss less than $\Omega(\e^{1/2})$.
\end{theorem}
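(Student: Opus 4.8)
The plan is to combine the phase-based pricing scheme of Algorithm~\ref{alg:adv-knownfixedeps-pricing} (designed for a known rate) with the ``checking round'' device used in the proof of Theorem~\ref{thm:adv-unknownfixedeps-symmetric}. The lower bound $\Omega(\eps^{1/2})$ needs no new argument: the hard instance of Theorem~\ref{thm:adv-knownfixedeps-pricing} already has a single fixed changing rate, so an algorithm that is not told $\eps$ can only do worse on it. For the upper bound I would maintain a guess $\heps$, initialized to $1/T$; this already disposes of the case $\eps\le 1/T$ (then $\heps\ge\eps$ throughout, so the run is a faithful execution of Algorithm~\ref{alg:adv-knownfixedeps-pricing} with the valid rate bound $1/T$ and hence has loss $\tilde O((1/T)^{1/2})=\tilde O(\eps^{1/2})$), so assume $\eps>1/T$. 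We then run Algorithm~\ref{alg:adv-knownfixedeps-pricing} with $\heps$ in place of $\eps$, with two changes: (i) independently in each step, with probability $q=\tilde{\Theta}(\heps^{1/2})$ the step is a \emph{checking round} in which we post $r_t+\heps$ (the top of the current confidence interval) instead of its bottom $\ell_t$; (ii) whenever we see a no-sale at a bottom-pricing step (so $v_t<\ell_t$) or a sale at a checking round (so $v_t\ge r_t+\heps$), we conclude $\heps<\eps$, set $\heps\leftarrow 2\heps$, and restart the phase structure --- a new \emph{super-phase}.

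The correctness backbone is the invariant already used for symmetric loss: as long as $\heps\ge\eps$, every confidence interval we maintain (including those returned by Algorithm~\ref{alg:binarysearch}) satisfies $v_t\in[\ell_t,r_t]$, so $v_t\ge\ell_t$ never triggers a no-sale at a bottom step and $v_t\le r_t+\eps\le r_t+\heps$ never triggers a sale at a checking round (when $\heps=\eps$ one posts strictly above $r_t+\heps$, or treats equality as a detection; this only affects constants). Hence $\heps$ is never doubled past $2\eps$, and there are only $O(\log(\eps T))=O(\log T)$ super-phases. For the \emph{final} super-phase, $\heps\in[\eps,2\eps)$ is a valid rate bound, so on its non-checking steps Algorithm~\ref{alg:adv-knownfixedeps-pricing} behaves exactly as analyzed in Theorem~\ref{thm:adv-knownfixedeps-pricing}, giving average revenue loss $\tilde O(\heps^{1/2})=\tilde O(\eps^{1/2})$; its checking rounds occur with probability $q=\tilde{\Theta}(\eps^{1/2})$ and cost at most $1$ each, adding another $\tilde O(\eps^{1/2})$ to the average.

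The delicate part, and the step I expect to be the main obstacle, is an \emph{earlier} super-phase with $\heps<\eps$: here the maintained interval may stop containing $v_t$ after the value's first large move, and the adversary can try to bleed us slowly while dodging detection. I would prove one amortized statement: in a super-phase with guess $\heps$, the expected revenue loss is $\tilde O(\heps^{1/2})$ per step, plus a single one-time charge of $\tilde O(\heps^{-1/2})$ for the block that ends with the doubling. The dichotomy: at any step where the value lies in (roughly) $[\ell_t,r_t+\heps]$ our per-step loss $v_t-\ell_t$ is at most the current interval width plus $\heps$, which over a length-$\heps^{-1/2}$ phase of Algorithm~\ref{alg:adv-knownfixedeps-pricing} telescopes to $\tilde O(1)$ exactly as in the known-rate analysis (the interval widens by $2\heps$ per step, so $\ell_t$ falls by at most $\heps^{-1/2}\cdot\heps=\heps^{1/2}$, and $\sum_k k\heps=O(1)$), i.e.\ rate $\tilde O(\heps^{1/2})$; and once the value rises and stays above $r_t+\heps$, every step is a detecting checking round with probability $q$, so we double $\heps$ after $\tilde O(1/q)=\tilde O(\heps^{-1/2})$ further steps in expectation, each costing at most $1$. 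A wrong interval produced by a start-of-phase binary search (possible only when $\heps<\eps$) shows up merely as per-step loss $\le 1$, which the same detection argument absorbs, and the binary search still terminates in $\tilde O(1)$ steps since its width recursion $a_{t+1}=a_t/2+2\heps$ holds whether or not $v_t\in[\ell_t,r_t]$.

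Summing over the $O(\log T)$ super-phases: the ``rate $\tilde O(\heps^{1/2})$'' contributions total $\tilde O(\eps^{1/2})\cdot T$ (every $\heps$ in play is $\le 2\eps$), while the one-time detection charges total $\sum_{i\ge 0}(2^i/T)^{-1/2}=O(\sqrt T)=O(\eps^{1/2}T)$ using $\eps\ge 1/T$; dividing by $T$ gives average revenue loss $\tilde O(\eps^{1/2})$, with the logarithmic number of super-phases absorbed into $\tilde O(\cdot)$. The one genuinely tricky calibration is the checking probability $q$: it must be a power of $\heps$ near $\heps^{1/2}$ --- large enough that upward drift is caught within $\tilde O(\heps^{-1/2})$ steps, yet small enough (using only $\heps\le 2\eps$, never $\eps$ itself, since $\eps$ is unknown) that the checking rounds cost only $\tilde O(\eps^{1/2})$ per step on average.
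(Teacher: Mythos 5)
Your proposal is correct and follows essentially the same route as the paper: a guess-and-double estimate $\heps$ starting at $1/T$, the known-rate phase algorithm run with $\heps$, randomized checking rounds at per-step rate $\tilde\Theta(\heps^{1/2})$ priced at the top of the confidence interval, and an amortized detection charge of $\tilde O(\heps^{-1/2})$ per doubling summing to $O(\sqrt{T})$. The only differences (independent per-step checking instead of one uniform checking step per phase, and the checking price $r_t+\heps$ versus $r_t$) are cosmetic and do not change the analysis.
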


\begin{proof}
The tightness $\Omega(\eps^{1/2})$ result is shown in Theorem~\ref{thm:adv-knownfixedeps-pricing}, so we only need to construct an algorithm with revenue loss $\tilde{O}(\e^{1/2})$.
We want to apply the same technique of ``checking steps'' as in the pricing algorithm for symmetric loss. Recall that the checking steps in Algorithm~\ref{alg:adv-unknownfixedeps-symmetric} repeatedly price at the upper bound and the lower bound of the current confidence interval of the value of the buyer to make sure the buyer's value is not too far away from the confidence bound. However, when minimizing revenue loss, the seller cannot afford to frequently check the upper bound of the confidence interval, since each time the buyer is very likely to reject the item and incurs a huge revenue loss. 
The solution to such a problem is that we can add one checking step to each phase of Algorithm~\ref{alg:adv-knownfixedeps-pricing}. By pricing at the lower bound $\ell_t$ of the confidence interval at time $t$ and the upper bound $r_{t+1}$ of the confidence interval at time $t+1$, the seller can know whether $v_{t-1}$ is far away from our confidence interval $[\ell_{t-1},r_{t-1}]$ for it. Thus for a phase with $m$ time steps and $k$ ``bad steps'' when the buyer's value is far away from the confidence interval, a random checking step can detect a bad step with probability $\frac{k}{m}$. This means that after $O(m)$ bad steps in expectation, a bad step will be detected and the algorithm will move to the next iteration with $\heps$ doubled. The algorithm is stated as follows.

\begin{algorithm}[h]
\caption{Revenue loss minimizing algorithm for adversarial buyer with unknown changing rate $\eps$}
\label{alg:adv-unknownfixedeps-pricing}
\begin{algorithmic}
\STATE $\heps\leftarrow \frac{1}{T}$
\WHILE{$\heps<\frac{1}{2}$}{
 \FOR{each phase $[t_0+1,t_0+m_{\heps}]$ of length $m_{\heps}=\heps^{-1/2}$}{
    \STATE Apply Algorithm~\ref{alg:binarysearch} to locate the current value of the buyer in an interval $[\ell_t,r_t]$ with length $\sqrt{\heps}$
    \STATE Randomly select a $t^*\in[t_0+1,t_0+m_{\heps}]$
    \FOR{each step $t$ in the phase}{
        \IF{$t\neq t^*$}{
            \STATE Price at $p_t\leftarrow \ell_t$
            \IF{$v_t<p_t$}{
                \STATE $\heps\leftarrow 2\heps$ and break (terminate the phase)
            }
            \ENDIF
        }
        \ELSE{
            \STATE Price at $p_t\leftarrow r_t$
            \IF{$v_t\geq p_t$}{
                \STATE $\heps\leftarrow 2\heps$ and break (terminate the phase)
            }
            \ENDIF
        }
        \ENDIF
        \STATE $[\ell_{t+1},r_{t+1}]\leftarrow[\ell_{t}-\eps,r_t+\eps]$
    }
    \ENDFOR
 }
 \ENDFOR
 }
 \ENDWHILE
\end{algorithmic}
\end{algorithm}
The algorithm first guesses $\heps=\frac{1}{T}$, and doubles $\heps$ whenever the algorithm detects a piece of evidence of $\heps$ being smaller than the true $\eps$. In particular, the algorithm maintains confidence bound $[\ell_t,r_t]$ that may contain the current value $v_t$ of the buyer. When $\heps$ first becomes larger than $\eps$ (thus at most $2\eps$), the algorithm will run smoothly without triggering any break statement since $v_t\in [\ell_t,r_t]$ always holds. The revenue loss is $\tilde{O}(\sqrt{\heps})=\tilde{O}(\sqrt{\eps})$ as analyzed in Theorem~\ref{thm:adv-knownfixedeps-pricing}. 

In an iteration when $\heps<\eps$, firstly there is an additional $O(\log\frac{1}{\heps})=\tilde{O}(1)$ loss for binary search initialization of the confidence interval compared to Algorithm~\ref{alg:adv-knownfixedeps-pricing} for the fixed $\eps$ setting. Let bad event $\event_t$ denote ``$v_t\geq r_t+2\eps$ or $v_t<\ell_t$''. If bad events never happen, the additional loss is at most $2\eps$ per step (thus $T\eps$ in total), since in the analysis of Algorithm~\ref{alg:adv-knownfixedeps-pricing} it has confidence interval $[\ell_t,r_t]$ rather than $[\ell_t,r_t+2\eps]$ here. 
In a phase with time steps $[t_0+1,t_0+m_{\heps}]$, if an event $\event_t$ happens, then there are two cases. If $v_t<\ell_t$, then it is detected when $p_t=\ell_t$, which almost surely happens. If $v_t>r_t+2\eps$, then it is detected when $p_{t+1}=r_{t+1}$ i.e. $t^*=t$, since $v_{t+1}\geq v_t-\eps>r_t+\eps=r_{t+1}$. 
Therefore, since $t^*$ is randomly selected from $[t_0+1,t_0+m_{\heps}]$, if $k$ events in $\event_{t_0+1},\cdots,\event_{t_0+m_{\heps}}$ happens, with probability $\frac{k}{m_{\heps}}$ a bad event gets detected. 
Thus, in an iteration with fixed $\heps<\eps$, when a bad event is detected, in expectation $m$ bad events have occurred. Each bad event will result in additional revenue loss at most 1, thus $\tilde{O}(m_{\heps})=O(\heps^{-1/2})$. The total contribution of revenue loss from bad events is at most $\sum_{i=0}^{\log T}\left(\frac{2^i}{T}\right)^{-1/2}=O(T^{1/2})=To(1)$. 
To summarize, the total revenue loss of the algorithm is $\tilde{O}(T\eps^{1/2})$ for Algorithm~\ref{alg:adv-knownfixedeps-pricing} with known $\eps$, plus the binary search cost $\tilde{O}(1)$ for locating the position of $v_t$ in each iteration of different $\heps$, plus the additional cost $O(T\eps)$ for having a slightly larger confidence interval in good events than Algorithm~\ref{alg:adv-knownfixedeps-pricing}, plus a total revenue loss of $T^{1/2}$ from the bad events. Sum up all the costs above we show that the total revenue loss of Algorithm~\ref{alg:adv-unknownfixedeps-pricing} is $T\tilde{O}(\eps^{1/2})$ for all time steps, thus $\tilde{O}(\eps^{1/2})$ on average. 

\end{proof}

When the buyer has stochastic value, we can modify Algorithm~\ref{alg:adv-unknownfixedeps-pricing} for an adversarial buyer such that in each phase is replaced by a phase in Algorithm~\ref{alg:rand-knownfixedeps-pricing}, with the normal pricing step $t\neq t^*$ pricing at $p_t=\ell_{t_0}-\tilde{O}(\heps^{-1/3})$, and each checking step $t^*$ at price $p_{t^*}=r_{t_0}+\tilde{O}(\heps^{-1/3})$. The analysis is almost identical to Theorem~\ref{thm:adv-unknownfixedeps-pricing}.

\begin{theorem}\label{thm:rand-unknownfixedeps-pricing}
If the buyer has stochastic value and a fixed changing rate $\epsilon$ unknown to the seller, there exists an online pricing algorithm with revenue loss $\tilde{O}(\e^{2/3})$. Further, no online algorithm can obtain a revenue loss less than $\Omega(\e^{2/3})$.
\end{theorem}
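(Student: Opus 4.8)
The plan is to get the lower bound for free from the known-rate case, and to obtain the upper bound by grafting the ``guess-and-double'' scheme of Algorithm~\ref{alg:adv-unknownfixedeps-pricing} onto the long-phase, martingale-exploiting scheme of Algorithm~\ref{alg:rand-knownfixedeps-pricing}. For the lower bound, the hard instance of Theorem~\ref{thm:rand-knownfixedeps-pricing} (an unbiased random walk with step size $\eps$ started at $\tfrac12$) already forces revenue loss $\Omega(\eps^{2/3})$ even against a seller who \emph{knows} $\eps$; since knowing $\eps$ can only help the seller, the same instance witnesses $\Omega(\eps^{2/3})$ here. As usual we may assume $\eps \ge 1/T$, since otherwise running the algorithm with $\eps' = 1/T$ already gives loss $\tilde O((\eps')^{2/3}) = o(1) = \tilde O(\eps^{2/3})$.

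For the upper bound I would maintain a guess $\heps$, initialized to $1/T$ and doubled whenever the algorithm sees behavior inconsistent with $\heps \ge \eps$. With $\heps$ held fixed, run phases of length $m_{\heps} = \heps^{-2/3}$ exactly as in Algorithm~\ref{alg:rand-knownfixedeps-pricing}: first call Algorithm~\ref{alg:binarysearch} to pin the value down to an interval of length $O(\heps)$ (cost $O(\log\tfrac1\heps)=\tilde O(1)$ per phase), then at every step of the phase price at $\ell_{t_0} - \delta$ with $\delta = \tilde\Theta(\heps^{2/3})$ (the analogue of the offset $4\eps^{2/3}\sqrt{\log(1/\eps)}$ in Algorithm~\ref{alg:rand-knownfixedeps-pricing}), except at one uniformly random ``checking'' step $t^*$ of the phase, where I price at $r_{t_0} + \delta$. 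A double (and phase restart) is triggered if a normal step fails to sell, or if the checking step does sell.

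The analysis mirrors Theorem~\ref{thm:adv-unknownfixedeps-pricing} with the exponent $1/2$ replaced by $2/3$. Once $\heps \ge \eps$ (hence $\heps \le 2\eps$ after the last doubling), Azuma's inequality over a window of length $m_{\heps}$ --- inequality~\eqref{eqn:Azmua} with $\heps$ in place of $\eps$ --- shows that with probability $\ge 1 - 1/m_{\heps}$ the value stays within $\delta$ of its post-binary-search estimate for the whole phase; on that event every normal step sells and loses $O(\heps + \delta) = \tilde O(\heps^{2/3})$, while the rare failure event costs $\le 1$ per step, i.e. $\tilde O(1)$ in expectation per phase, so summing over all phases this regime contributes $\tilde O(T\eps^{2/3})$. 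While $\heps < \eps$, call a step \emph{bad} if the value leaves the band $[\ell_t - \delta,\, r_t + 2\eps + \delta]$: a low-side bad step is caught almost surely at a normal step, and a high-side bad step is caught exactly when $t^* = t$, i.e. with probability $1/m_{\heps}$, so in expectation a detection --- hence a doubling --- occurs within $O(m_{\heps})$ bad steps, each costing $\le 1$. Since there are only $O(\log T)$ doublings and $\sum_i (2^i/T)^{-2/3} = O(T^{2/3}) = o(T)$, the under-guessing regime costs $o(T)$; the extra $O(\eps)$-per-step slack from carrying the slightly wider band costs $O(T\eps)$; and the $\tilde O(1)$-per-phase binary-search cost sums to $\tilde O(1)\cdot (T/m_{\heps}) \le \tilde O(T\eps^{2/3})$. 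Adding these gives total loss $T\tilde O(\eps^{2/3})$, i.e. $\tilde O(\eps^{2/3})$ on average.

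I expect the main obstacle to be the one Theorem~\ref{thm:rand-knownfixedeps-pricing} handles only implicitly: making the martingale concentration legitimate even though the seller conditions on the phase not having terminated (all normal steps sold). The offset $\delta$ is exactly what buys this robustness --- pricing a full $\delta$ below the anchor makes a sale essentially certain whenever the walk stays in the $\delta$-band, so conditioning on ``all normal steps sold'' coincides, up to the $1/m_{\heps}$ failure probability, with conditioning on ``the walk stayed in the band,'' which does not distort the increments enough to affect the bounds. The only genuinely new bookkeeping point relative to the adversarial Theorem~\ref{thm:adv-unknownfixedeps-pricing} is that a single random checking step per (now longer) phase must still detect upward drift fast enough; but the per-bad-step detection probability $1/m_{\heps} = \heps^{2/3}$ scales inversely with the phase length, so the expected number of bad steps before a doubling is again $\Theta(m_{\heps})$, and the accounting goes through verbatim.
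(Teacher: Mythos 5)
Your proposal matches the paper's proof essentially step for step: the lower bound is inherited from Theorem~\ref{thm:rand-knownfixedeps-pricing}, and the upper bound is exactly the paper's Algorithm~\ref{alg:rand-unknownfixedeps-pricing} (guess-and-double $\heps$, phases of length $\heps^{-2/3}$ with binary-search re-anchoring, pricing at $\ell-\delta$ with a single uniformly random checking step at $r+\delta$), analyzed by the same split into the over-guess regime (Azuma plus a union bound makes spurious doublings negligible) and the under-guess regime (a detection occurs within $O(m_{\heps})$ bad steps, summing to $O(T^{2/3})=o(T)$), plus the $\tilde O(1)$-per-phase binary-search cost. The only cosmetic difference is that the paper pins $\delta=4\heps^{2/3}\sqrt{\log T}$ so that a false doubling occurs with probability at most $1/T$ over the entire horizon, which is how it bounds the lingering cost of a permanently inflated guess; your $\tilde\Theta(\heps^{2/3})$ offset with per-phase accounting is serving the same purpose.
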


\begin{proof}
The $\Omega(\e^{2/3})$ tightness of the theorem follows from Theorem~\ref{thm:rand-knownfixedeps-pricing}. Now we provide an algorithm with revenue loss $\tilde{O}(\e^{2/3})$. The algorithm combines Algorithm~\ref{alg:rand-knownfixedeps-pricing} for stochastic buyer with known changing rate and Algorithm~\ref{alg:adv-unknownfixedeps-pricing} for adversarial buyer with unknown changing rate as follows. 
For every iteration with guess $\heps$ in Algorithm~\ref{alg:adv-unknownfixedeps-pricing}, replace each phase of length $\heps^{-1/2}$ by a phase in Algorithm~\ref{alg:rand-knownfixedeps-pricing} with a random checking step that prices at the top of the confidence interval which tries to find the evidence of $\heps<\eps$. The analysis is almost the same as the analysis for Algorithm~\ref{alg:adv-unknownfixedeps-pricing}.

\begin{algorithm}[htb]
\caption{Revenue loss minimizing algorithm for stochastic buyer with unknown changing rate $\eps$}
\label{alg:rand-unknownfixedeps-pricing}
\begin{algorithmic}
\STATE $\heps\leftarrow \frac{1}{T}$
\WHILE{$\heps<\frac{1}{2}$}{
 \FOR{each phase $[t_0+1,t_0+m_{\heps}]$ of length $m_{\heps}=\heps^{-2/3}$}{
    \STATE Apply Algorithm~\ref{alg:adv-knownfixedeps-pricing} to locate the current value of the buyer in an interval $[\ell,r]$ with length $\sqrt{\heps}$
    \STATE Randomly select a $t^*\in[t_0+1,t_0+m_{\heps}]$
    \FOR{each step $t$ in the phase}{
        \IF{$t\neq t^*$}{
            \STATE Price at $p_t\leftarrow \ell-4\heps^{-2/3}\sqrt{\log T}$
            \IF{$v_t<p_t$}{
                \STATE $\heps\leftarrow 2\heps$
                \STATE break (terminate the phase)
            }
            \ENDIF
        }
        \ELSE{
            \STATE Price at $p_t\leftarrow r+4\heps^{-2/3}\sqrt{\log T}$
            \IF{$v_t\geq p_t$}{
                \STATE $\heps\leftarrow 2\heps$
                \STATE break (terminate the phase)
            }
            \ENDIF
        }
        \ENDIF
    }
    \ENDFOR
 }
 \ENDFOR
 }
 \ENDWHILE
\end{algorithmic}
\end{algorithm}

Now we analyze the revenue loss of the algorithm. In each iteration with changing rate estimate $\heps$, let $\delta=4\heps^{-2/3}\sqrt{\log T}$. If $\heps\geq\eps$, then the probability that $v_t\not\in[\ell-\delta,r+\delta]$ is at most $\frac{1}{T^2}$ for each step $t$ (as analyzed in \eqref{eqn:Azmua}). By union bound, the probability that the algorithm falsely doubles $\heps$ when $\heps\geq\eps$ is at most $\frac{1}{T}$ in all time steps. Therefore the revenue loss contributed from this case is $O(1)$, since the total revenue loss when this case happens is at most $T$. 

Then we assume that $\heps$ is always at most $\eps$, and the rest of the analysis is identical to the analyze of Algorithm~\ref{alg:adv-unknownfixedeps-pricing}. In each iteration with fixed $\heps$, when a bad event $v_t\not\in[\ell-\delta,r+\delta]$ is detected, in expectation $m_{\heps}=\heps^{-2/3}$ bad events have occurred, thus the revenue loss contributed from bad events is $\heps^{-2/3}$ in an iteration with changing rate estimate $\heps$, thus $\sum_{\heps}\heps^{-2/3}=O(T^{2/3})$ in total. The revenue loss for each time step with good event $v_t\not\in[\ell-\delta,r+\delta]$ is $O(\delta)=\tilde{O}(\heps^{2/3})$, thus $\tilde{O}(T\heps^{2/3})=\tilde{O}(T\eps^{2/3})$ in total. Combining the revenue loss of steps with good events, steps with bad events, and cases where $\heps>\eps$, we have the total revenue loss of the algorithm for all time steps is 
$\tilde{O}(T\eps^{2/3})+O(T^{2/3})+\tilde{O}(1)=T\tilde{O}(\eps^{2/3})$.

Note: The current revenue loss of the algorithm is actually $O(\eps^{2/3}\log T)$. It can get improved to $O(\eps^{2/3}polylog\frac{1}{\eps})$, if the algorithm sets $\delta=4\heps^{2/3}\log^4\frac{1}{\heps}$, and doubles $\heps$ whenever the current total number of bad events detected for the current $\heps$ exceeds $\frac{1}{m^2}$ fraction of the current total number of time steps. By Azuma's inequality, we can argue that the probability that $\heps>\eps$ is falsely doubled is less than $\frac{1}{m^2t^2}$ in step $t$, which will only in expectation contribute to revenue loss $\frac{T}{m^2}=O(T\heps^{2/3})$ in total. The revenue loss from good events is $O(\delta)=\tilde{O}(\heps^{2/3})$ per step. Thus the average revenue loss is $\tilde{O}(\eps^{2/3})$ per step.
\end{proof}

\section{Buyer's Changing Rate is Dynamic, Unknown}\label{sec:unknown_dynamic_eps}
In this section, we study a more complicated setting where the buyer's value changes in a more dynamic way. In particular, $|v_{t+1}-v_t|$ are upper bounded by possibly different non-increasing $\epsilon_t$ that are unknown to the seller.

For the symmetric-loss minimization problem with an adversarial buyer, we show that when $\eps_t$ are non-increasing, the seller can still achieve a symmetric loss of $\tilde{O}(\bar{\eps})$ as in the case of fixed $\eps$. 

\begin{theorem}\label{thm:adv-decreasingeps-symmetric}
If the buyer has adversarial values and non-increasing changing rates $\eps_t$ unknown to the seller, then there exists an online pricing algorithm with symmetric loss $\tilde{O}(\bar{\eps})$. Furthermore, no online algorithm can obtain a symmetric loss less than $\Omega(\bar{\eps})$.
\end{theorem}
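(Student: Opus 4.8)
The lower bound is immediate: a fixed changing rate $\eps$ is in particular a non-increasing sequence with $\bar\eps=\eps$, and the problem is only harder when $\eps$ is unknown, so the $\Omega(\eps)$ bound of Theorem~\ref{thm:adv-knownfixedeps-symmetric} already gives the $\Omega(\bar\eps)$ claimed here. The rest of the plan concerns the upper bound.

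The algorithm I would use augments Algorithm~\ref{alg:adv-unknownfixedeps-symmetric} with \emph{optimistic halving}. As before, maintain a guess $\heps$ (starting at $1/T$) and a confidence interval $[\ell_t,r_t]$, and process time in consecutive triples: one step pricing at $\ell_t$ and one at $r_t+\heps$ (the two checking steps, catching respectively a value that has fallen below or climbed above the interval) and one step pricing at the midpoint (the binary-search step that shrinks the interval). Whenever a checking step reveals an inconsistency ($v<\ell_t$ at the lower check, or $v\ge r_t+\heps$ at the upper check), \emph{double} $\heps$ and reset $[\ell_t,r_t]$ to $[0,1]$, exactly as in Algorithm~\ref{alg:adv-unknownfixedeps-symmetric}. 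The new ingredient: keep a counter of the number of steps since $\heps$ last changed, and once it reaches $1/\heps$, \emph{halve} $\heps$ (keeping the current interval, which by then has length $\Theta(\heps)$ from the binary search, hence is only a constant factor too wide for the halved rate).

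For the analysis I would exploit the non-increasing hypothesis to give the timeline clean structure. Set $k(t)=\lfloor\log_2(1/\eps_t)\rfloor$, so that $2^{-k(t)-1}<\eps_t\le 2^{-k(t)}$; since $\eps_t$ is non-increasing, $k(t)$ is non-decreasing, so the timeline splits into contiguous blocks $B_0,B_1,\dots$ with $k(t)=k$ on $B_k$, and we write $T_k=|B_k|$. The key observation is that inside $B_k$ the guess $\heps$ can only oscillate between the two levels $2^{-k}$ and $2^{-k-1}$ (plus a brief inherited visit to $2^{-k+1}$ at the very start of the block, quickly halved away): a guess $\heps\ge 2^{-k}$ is \emph{safe} (it dominates $\eps_t$, so the standard invariant $v_t\in[\ell_t,r_t]$ holds and no inconsistency is ever triggered), while $\heps=2^{-k-1}$ is \emph{unsafe} and will be undone (unless we have in fact already entered $B_{k+1}$). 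From the invariant and the binary-search analysis behind Theorem~\ref{thm:adv-knownfixedeps-symmetric}, during a safe stretch the interval shrinks to length $O(\heps)$ and every step has symmetric loss $O(\heps+(r_t-\ell_t))=O(2^{-k})$, with the one-time re-shrinking after a reset costing only $O(1)$ total loss (the midpoint losses form a geometric series) over $O(\log(1/\heps))=O(k)$ steps. When $\heps$ is halved to the unsafe level $2^{-k-1}$ I would argue that $v_t$ cannot drift more than $O(2^{-k})$ past the checking margin before a checking step fires, because $v_t$ moves by at most $\eps_t\le 2^{-k}$ per step while the interval margin is $\Theta(\heps)=\Theta(2^{-k})$ and the checking steps recur every three steps; hence either an inconsistency is detected within $O(1)$ steps (triggering a doubling back to $2^{-k}$ and a reset), or it is not detected, in which case the per-step symmetric loss nonetheless stays $O(2^{-k})$.

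Putting the pieces together, a full oscillation cycle in $B_k$ consists of about $1/\heps\asymp 2^k$ safe steps of loss $O(2^{-k})$ each, followed by $O(1)$ unsafe steps and an $O(1)$-loss re-initialization, for a total of $O(1)$ loss per cycle over $\Omega(2^k)$ steps; since $B_k$ contains $O(T_k/2^k)$ such cycles, the loss charged to $B_k$ is $O(T_k\,2^{-k})=O\!\big(\sum_{t\in B_k}\eps_t\big)$. Summing over blocks, and adding the $O(\log T)$ loss of the initial warm-up in which $\heps$ climbs from $1/T$ to $\eps_1$, gives total symmetric loss $O(T\bar\eps)+O(\log T)$, i.e.\ average loss $O(\bar\eps)+o(1)=\tilde O(\bar\eps)$. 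The step I expect to be the main obstacle is precisely the bound on the loss incurred while $\heps$ sits at the unsafe level: one must rule out the adversary slowly walking $v_t$ out of the (shrinking, moving) confidence interval in a way that both dodges the checking prices and accumulates loss, and it is here that the non-increasing assumption is essential — it confines the oscillation of $\heps$ to two adjacent levels per block, so the ``search for the right level'' cost is paid once per block transition (and, since transitions are seamless, is in fact free) rather than being re-incurred throughout, which is exactly why the general case only achieves the weaker $\tilde O(\bar\eps\log T)$ of Theorem~\ref{thm:adv-unknownchangingeps-symmetric}.
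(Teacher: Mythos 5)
Your lower bound and overall algorithmic template (doubling $\heps$ on an observed inconsistency, optimistically halving it after a quiet stretch, and decomposing the timeline into blocks where $\eps_t$ lives at one dyadic level) are the same as the paper's. But the analysis has a genuine gap at exactly the step you flag as the main obstacle: the claim that inside a block $B_k$ the guess ``can only oscillate between the two levels $2^{-k}$ and $2^{-k-1}$'' because the unsafe level ``will be undone.'' The sequence $\eps_t$ only \emph{upper bounds} $|v_{t+1}-v_t|$ (equation \eqref{val:eqn}); the adversary is free to keep $v_t$ essentially constant, in which case no checking step ever fires, the unsafe guess is never undone, and $\heps$ keeps halving far below $\eps_t$ while $t$ is still in $B_k$. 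The adversary can then resume moving at the full rate $2^{-k}$, forcing a cascade of $j$ detections to climb back up, and in your algorithm each detection resets the interval to $[0,1]$ and costs $\Theta(1)$ re-shrinking loss. Your per-cycle accounting (``about $2^k$ safe steps, then $O(1)$ unsafe steps and one $O(1)$ reset'') simply does not cover this trajectory, so the stated amortization does not go through. The bound can still be rescued — the $j$-level drift takes $\Omega(2^{k+j})$ steps during which, by non-increasingness, the rates are at least $2^{-k-1}$, so the accumulated budget $\Omega(2^{j})$ dominates the $O(j)$ cascade cost — but that charging argument (or an equivalent potential function) is the missing content, not a detail.

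For comparison, the paper's Algorithm~\ref{alg:adv-decreasingeps-symmetric} handles this differently: it halves only after $\log^3 T$ consecutive clean rounds, and upon bad evidence it does not restart from $[0,1]$ but restores the last good round's interval expanded by (elapsed rounds)$\times$(new $\heps$), so a maximal bad block has length at most $\log T$ and per-round loss $O(\eps_i^*\log T)$; the excess $O(\eps_i^*\log^2 T)$ of each bad block is then compensated by the $\Omega(\log^3 T)$ good rounds that must separate consecutive bad blocks. Either you adopt that mechanism (polylog halving threshold plus anchored interval restoration and the compensation argument), or you keep your $1/\heps$ threshold and $[0,1]$ resets and supply the drift-versus-budget amortization above; as written, the two-level confinement claim on which your cycle count rests is false, so the upper-bound proof is incomplete. (Your lower-bound reduction to Theorem~\ref{thm:adv-knownfixedeps-symmetric} is correct and is the same as the paper's.)
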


\begin{proof}
The tightness result $\Omega(\bar{\eps})$ was shown in Theorem~\ref{thm:adv-knownfixedeps-symmetric}. Now we propose an algorithm with symmetric loss $\tilde{O}(\bar{\eps})$.

We propose the following algorithm for the seller that repeatedly guess the current level of changing rate at each time step. The algorithm starts with guessing $\heps=\frac{1}{2}$ being an estimate of $\eps_t$, and reduce the value of the guess $\heps$ by a factor of $\frac{1}{2}$ if in $O(\log T)$ time steps the algorithm cannot find any evidence supporting $\heps<\eps_t$. Whenever the algorithm finds a piece of evidence that $\heps<\eps_t$, the algorithm repeatedly doubles $\heps$ and updates the confidence interval according to the new $\heps$, until the evidence of $\heps<\eps_t$ disappears. Such a dynamic update of $\heps$ keeps the symmetric loss bounded. Below we show the algorithm in detail.

\begin{algorithm}[htb]
\caption{Symmetric loss minimizing algorithm for adversarial buyer with unknown decreasing changing rate $\eps_t$}
\label{alg:adv-decreasingeps-symmetric}
\begin{algorithmic}
 \STATE Set confidence interval $[\ell_1,r_1]\leftarrow [0,1]$
 \STATE Set changing rate estimate $\heps\leftarrow\frac{1}{2}$
 \STATE Let $a\leftarrow 0$ be the recent round without bad evidence
 \STATE Let $s\leftarrow 0$ be the number of rounds for the current $\heps$
    \FOR{each round $k$ with three consecutive time steps $3k+1,3k+2,3k+3$}{
        \STATE $s\leftarrow s+1$
        \STATE Set price $p_{3k+1}\leftarrow \ell_{3k+1}$
        \STATE Set price $p_{3k+2}\leftarrow r_{3k+2}=r_{3k+1}+\heps$
        \STATE Set price $p_{3k+3}\leftarrow \frac{\ell_{3k+1}+r_{3k+1}}{2}$
        \IF{the seller finds $v_{3k+1}\geq\ell_{3k+1}$ and $v_{3k+2}<r_{3k+2}$}{
            \STATE $a\leftarrow k$
            \IF{$v_{3k+3}<p_{3k+3}$}{
                \STATE $[\ell_{3k+4},r_{3k+4}]\leftarrow[\ell_{3k+1}-3\heps,p_{3k+3}+\heps]$
            }
            \ELSE{
                \STATE $[\ell_{3k+4},r_{3k+4}]\leftarrow[p_{3k+3}-\heps,r_{3k+1}+3\heps]$
            }
            \ENDIF
            \IF{
            $s>\log^3 T$ and
            $\heps>\frac{1}{T}$}{
                \STATE $\heps\leftarrow \heps/2$
            }
            \ENDIF
        }
        \ELSE{
            \STATE $\heps\leftarrow 2\heps$
            \STATE $[\ell_{3k+4},r_{3k+4}]\leftarrow [\ell_{3a+1}-(3k+3-3a)\heps,r_{3a+1}+(3k+3-3a)\heps]$
        }
        \ENDIF
    }
    \ENDFOR
\end{algorithmic}
\end{algorithm}
The same as Algorithm~\ref{alg:adv-unknownfixedeps-symmetric} for unknown fixed changing rate, the algorithm takes a round of 3 time steps each time, and always keeps a confidence interval $[\ell_t,r_t]$ for the true value of the buyer at each time step. The same as in Algorithm~\ref{alg:adv-knownfixedeps-symmetric}, for three consecutive time steps $3k+1,3k+2,3k+3$ in round $k$, the algorithm first prices at the bottom of the confidence interval, then at the top of the confidence interval, finally in the middle of the interval. If $\heps\geq\eps_{3k+1}$, this is going to be a ``good round'' with $v_{3k+1}\geq p_{3k+1}$ and $v_{3k+2}<p_{3k+2}$, so no bad evidence is found by the algorithm. If this holds for $\log^3 T$ consecutive rounds, it means that the algorithm has been stable with the current $\heps$, and the confidence interval has been $O(\heps_{3k+1})$ for $\Omega(\log^3T)$ steps. Then the algorithm decreases $\heps$ by a factor of $\frac{1}{2}$ to try to explore the possibility that the true changing rate $\eps_{3k+1}$ is currently much smaller than $\heps$. However, when $\heps<\eps_{3k+1}$, the confidence interval may expand not enough to contain the true value of the buyer, so there is possibility that this round becomes a ``bad round'' with $v_{3k+1}<p_{3k+1}$ or $v_{3k+2}\geq p_{3k+2}$. In this case, the algorithm doubles $\heps$. The confidence bound is not accurate as $\heps<\eps_{3k+1}$ previously. However, if the new $\heps\geq\eps_{3k+1}$, the algorithm gets the correct confidence bound $[\ell_{3a+1},r_{3a+1}]$ in the previous good round $a$, and calculates a correct confidence bound $[\ell_{3k+4},r_{3k+4}]\leftarrow [\ell_{3a+1}-(3k+3-3a)\heps,r_{3a+1}+(3k+3-3a)\heps]$ for time step $3k+4$, since the buyer's value cannot change more than $(3k+3-3a)\eps_{3a+1}<(3k+3-3a)\heps$ in $3k+3-3a$ steps.

Now we analyze the performance of the algorithm. Partition time horizon $[T]$ to $\log T$ intervals $I_1,I_2,\cdots,I_{\log T}$, such that for each time interval $I_i$ and time $t\in I_i$, $\eps_t\in(2^{-i},2^{-i+1}]$. In other words, each time interval contains time steps with similar changing rate. 
For any time interval $I_i$, let $\eps_i^*=2^{-i+1}$. We argue that in time interval $I_i$ the symmetric loss is $O(\log^4T+|I_i|\eps_i^*=O(\log^4T+\sum_{t\in I_i}\eps_t)$, and the theorem follows immediately by taking the average of the symmetric loss in all time intervals.

Notice that in $I_i$, the algorithm may start with $\heps>\eps_i^*$, then decrease gradually to reach $\heps=\eps_i^*$. The symmetric loss in this process is $O(\log^4T)$, since there will be no bad rounds, and $\heps$ will stay at each $\heps>\eps_i^*$ for at most $O(\log^3 T)$ steps, thus reduce to $\eps_i^*$ in $O(\log^4T)$ steps. Now we analyze what happens when $\heps\leq \eps_i^*$.


For any bad round $k$, since it is at most $b$ rounds from the closest good round $a=k-b$ with confidence interval $[\ell_{3a+1},r_{3a+1}]$, all of the values and prices in the round are bounded by interval $[\ell_{3a+1}-3b\eps_i^*,r_{3a+1}+3b\eps_i^*]\subseteq[\ell_{3a+1}-3\log T\eps_i^*,r_{3a+1}+3\log T\eps_i^*]$, since in round $a$ all of the values and confidence boundaries are bounded by $[\ell_{3a+1}-O(1)\eps_i^*,r_{3a+1}+O(1)\eps_i^*]$. Thus the symmetric loss for any bad round that is $b$ rounds away from a good round is $O(r_{3a+1}-\ell_{3a+1})+O(\eps_i^*)+6\log T\eps_i^*=O(\log T\eps_i^*)$.

As analyzed in Algorithm~\ref{alg:adv-unknownfixedeps-symmetric} for static changing rate, in a good round $k$ without bad evidence, the symmetric loss in the round is proportional the length of the confidence interval plus the true changing rate $\eps_{3k+1}$ and the estimated changing rate $\heps_k$ for the round, i.e. $O(r_{3k+1}-\ell_{3k+1}+\heps_k+\eps_{3k+1})$. In such a good round, since $\heps_k$, $\eps_{3k+1}$ and the length of confidence interval $r_{3k+1}-\ell_{3k+1}$ are all $O(\eps_i^*)$, the good round has symmetric loss $O(\eps_i^*)\leq c\eps_i^*$ for some constant $c$. 

The same as in the analysis of Algorithm~\ref{alg:adv-decreasingeps-pricing} for the revenue loss setting, we can use the loss of good rounds to compensate the loss of the bad rounds. We argue that the per-step symmetric loss in $I_i$ is $\leq 2c\eps_i^*$. Between two blocks of consecutive bad rounds, there are at least $O(\log^3 T)$ good rounds. In a block of bad rounds, the symmetric loss is $O(\log T\eps_i^*)$ more than the expected $2c\eps_i^*$ loss per step, thus $O(\log^2 T\eps_i^*)$ more in total in the block of bad rounds. At the same time, In the $O(\log^3 T)$ consecutive good rounds, the symmetric loss is $c\eps_i^*$ less than the expected $2c\eps_i^*$ loss per step, thus $O(\log^3 T\eps_i^*)$ less in total in the block of good rounds.

Combine the cases above for good rounds and bad rounds, we know that in time interval $I_i$ with $\heps\leq\eps_i^*$, the symmetric loss is $\leq c\eps_i^*$ in each round on average. This finishes the proof of the total symmetric loss in the time interval being $O(\log^4T+|I_i|\eps_i^*=O(\log^4T+\sum_{t\in I_i}\eps_t)$.

Thus the total symmetric loss of the algorithm is $\sum_{i\leq \log T}O(\log^4T+\sum_{t\in I_i}\eps_t)=To(1)+O(1)\sum_{t\in I_i}\eps_t=T\tilde{O}(\bar{\eps})$.

\end{proof}

For the revenue loss minimization problem for an adversarial buyer, we can also recover the results in previous sections, even when $\eps_t$ are unknown to the seller. We describe the result and the algorithm in detail here.

\begin{theorem}\label{thm:adv-decreasingeps-pricing}
If the buyer has adversarial values and non-increasing changing rates $\eps_t$ unknown to the seller, there exists an online pricing algorithm with revenue loss $\tilde{O}(\bar{\e}^{1/2})$, here $\bar{\e}=\frac{1}{T}\sum_{t=1}^{T}\eps_t$. Further, no online algorithm can obtain a revenue loss less than $\Omega(\bar{\e}^{1/2})$.
\end{theorem}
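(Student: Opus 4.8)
The lower bound $\Omega(\bar{\e}^{1/2})$ is inherited directly from Theorem~\ref{thm:adv-knownfixedeps-pricing} by taking $\eps_t\equiv\eps$, so the task is to exhibit an algorithm with average revenue loss $\tilde{O}(\bar{\e}^{1/2})$. The plan is to run Algorithm~\ref{alg:adv-unknownfixedeps-pricing} with the optimistic halving of Step~3 grafted on: keep a guess $\heps$ and a confidence interval; proceed in phases of length $m_{\heps}=\heps^{-1/2}$; at the start of each phase re-locate the value to width $O(\sqrt{\heps})$ via Algorithm~\ref{alg:binarysearch}, draw a uniformly random \emph{checking step} $t^*$ inside the phase, price at the interval's lower endpoint $\ell_t$ on every normal step and at the upper endpoint $r_t$ on step $t^*$, and widen the interval by $\heps$ per step; double $\heps$ (and restart the phase) as soon as an inconsistency shows up --- $v_t<\ell_t$ on a normal step, or $v_t\ge r_t$ on the checking step --- and halve $\heps$ optimistically after $\Theta(1/\heps)$ consecutive time steps have passed at the current $\heps$ without an inconsistency.

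For the analysis, partition $[T]$ into the $O(\log T)$ blocks $I_i=\{\,t:\eps_t\in(2^{-i},2^{-i+1}]\,\}$, which are contiguous because $\eps_t$ is non-increasing, and set $\eps_i^*=2^{-i+1}$, so that $\eps_i^*/2<\eps_t\le\eps_i^*$ throughout $I_i$. I aim to show the revenue loss incurred inside $I_i$ is $\tilde{O}(|I_i|\sqrt{\eps_i^*}+\mathrm{polylog}\,T)$. The theorem follows at once: $\sum_i|I_i|\sqrt{\eps_i^*}\le\sqrt{2}\sum_t\sqrt{\eps_t}\le\sqrt{2}\,T\sqrt{\bar{\e}}$ by concavity of $\sqrt{\cdot}$ (Jensen), $\sum_i\mathrm{polylog}\,T=T\cdot o(1)$, and the one-time descent of $\heps$ from $\frac{1}{2}$ at the start adds a further $\tilde{O}(\sqrt{T})=T\cdot o(1)$. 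Because $\heps$ is always a power of two moved one notch at a time, inside $I_i$ it is either \emph{consistent} (sitting at $\eps_i^*$, possibly one notch above just after entering the block) --- in which case $\heps$ dominates every $\eps_t$ in $I_i$, the algorithm behaves exactly like the known-rate Algorithm~\ref{alg:adv-knownfixedeps-pricing} run with rate $\heps$, and the per-step loss is $\tilde{O}(\sqrt{\heps})=\tilde{O}(\sqrt{\eps_i^*})$ (the single checking step and the $O(\log(1/\heps))$-cost re-location amortizing to $\tilde{O}(\sqrt{\heps})$ per step over the $m_{\heps}$ steps of a phase) --- or it has $\heps\le\eps_i^*/2<\eps_t$ and is on an \emph{overshoot stretch} (or a climb-back) until a checking step detects the inconsistency and doubles it back up.

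The crux is the overshoot stretch, and two observations handle it, both exploiting the scale partition. First, \emph{the per-step loss stays $O(\sqrt{\heps})$ even there.} When $\eps_t$ exceeds $\heps$ by at most a bounded factor (the generic case inside $I_i$), over one phase the value drifts by $O(\eps_t m_{\heps})=O(\sqrt{\heps})$ while the interval widens by $O(\heps m_{\heps})=O(\sqrt{\heps})$, and Algorithm~\ref{alg:binarysearch}, whose $O(\log(1/\heps))$ search steps witness only $O(\eps_t\log(1/\heps))$ of drift, returns an interval off by only $O(\eps_t\log(1/\heps))$; hence a value that has slipped out of the under-widened interval is still only $O(\sqrt{\heps})$ above $\ell_t$, so $v_t-\ell_t=O(\sqrt{\heps})$ on every step. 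When instead $\eps_t\gg\heps$, the located interval can be useless, but then the value moves so fast that it leaves the width-$O(\sqrt{\heps})$ interval within a handful of steps and stays out, turning each such step into a ``bad event'' of cost at most $1$. Second, \emph{detection is fast.} As in the proof of Theorem~\ref{thm:adv-unknownfixedeps-pricing}, the uniformly random checking step lands on one of the $k$ ``slipped-out'' steps of a phase with probability $k/m_{\heps}$ (a slip below $\ell_t$ is caught on every normal step), so an overshoot stretch at level $\heps$ lasts $O(m_{\heps})$ phases in expectation, i.e.\ $O(m_{\heps}^2)=O(1/\heps)$ time steps, and contributes $O(1/\heps)\cdot O(\sqrt{\heps})+O(m_{\heps}\log(1/\heps))=\tilde{O}(\heps^{-1/2})$ to the loss before $\heps$ doubles.

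It remains to sum over $I_i$. A \emph{regular cycle} --- a consistent stretch of $\Theta(1/\eps_i^*)$ steps at $\heps=\eps_i^*$ (per-step loss $\tilde{O}(\sqrt{\eps_i^*})$) plus one overshoot stretch of $O(1/\eps_i^*)$ steps at $\heps=\eps_i^*/2$ --- costs $\tilde{O}((\eps_i^*)^{-1/2})$, and there are $O(|I_i|\eps_i^*)+O(1)$ of them, contributing $\tilde{O}(|I_i|\sqrt{\eps_i^*})$ total. If instead the adversary freezes the value so $\heps$ gets halved repeatedly to some $\heps_{\min}$ and is then driven back up, the descent costs $\sum_{\heps_{\min}\le\heps\le\eps_i^*}(1/\heps)\cdot O(\sqrt{\heps})=\tilde{O}(\heps_{\min}^{-1/2})$ over $\Theta(\heps_{\min}^{-1})$ steps and the climb back adds one overshoot stretch per level, another $\tilde{O}(\heps_{\min}^{-1/2})$ over $\Theta(\heps_{\min}^{-1})$ steps; this ``super-cycle'' therefore has per-step loss $\tilde{O}(\heps_{\min}^{1/2})\le\tilde{O}(\sqrt{\eps_i^*})$, so any packing of such cycles into $I_i$ still totals $\tilde{O}(|I_i|\sqrt{\eps_i^*})$. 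Thus, apart from an $\tilde{O}(\mathrm{polylog}\,T)$ residue for the partial cycle straddling the $I_{i-1}/I_i$ boundary, the per-step loss throughout $I_i$ is $\tilde{O}(\sqrt{\eps_i^*})$, which gives the claimed bound and hence average revenue loss $\tilde{O}(\bar{\e}^{1/2})$. The step I expect to be most delicate is the first observation about the overshoot stretch --- verifying that the binary-search subroutine degrades gracefully once the guess has dropped below the true rate, so that the per-step loss stays $O(\sqrt{\heps})$ except on steps that are provably detected and corrected quickly.
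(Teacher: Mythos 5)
Your proposal is correct and takes essentially the same route as the paper: the identical algorithm (phases of length $\heps^{-1/2}$ with binary-search re-location and a uniformly random checking step, doubling $\heps$ on any inconsistency, optimistically halving after $\Theta(1/\heps)$ clean steps), the same dyadic partition into blocks $I_i$ with $\eps_i^*=2^{-i+1}$, the same amortization charging each doubling's $O(\heps^{-1/2})$ expected extra loss (undetected escape steps of cost at most $1$) to the preceding $\Theta(1/\heps)$-step clean pass, and the same Jensen/Cauchy--Schwarz step $\sum_i |I_i|\sqrt{\eps_i^*}=O(T\sqrt{\bar\eps})$. The only deviations are bookkeeping (your ``regular/super-cycle'' packaging versus the paper's count of halving versus doubling iterations), plus a harmless overstatement: the per-block residue from descents straddling block boundaries is $\tilde O(\heps_{\min}^{-1/2})$ rather than $\mathrm{polylog}\,T$, which, exactly as in the paper's $\tilde O((\eps_i^*)^{-1/2})$ term, still sums to $T\cdot o(1)$ over the $O(\log T)$ blocks.
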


\begin{proof}
The $\Omega(\bar{\e}^{1/2})$ tightness result has been shown in Theorem~\ref{thm:adv-knownfixedeps-pricing} with all $\eps_t$ being identical. Now we show that there exists an algorithm with revenue loss $\tilde{O}(\bar{\e}^{1/2})$.

When $\eps_t$ decreases, the seller needs to detect such a trend timely, otherwise the loss of each time step is going to be not comparable to $\sum_t\eps_t$. 
We propose the following algorithm for the seller, that repeatedly guesses the current level of changing rate at each time step. The algorithm starts with guessing $\heps=\frac{1}{2}$ being an estimate of $\eps_t$, and reduces the value of the guess $\heps$ by a factor of $\frac{1}{2}$ if in several time steps the algorithm cannot find any evidence supporting $\heps<\eps_t$. Whenever the algorithm finds evidence that supports $\heps<\eps_t$, the algorithm repeatedly doubles $\heps$ and updates the confidence interval according to the new $\heps$, until the evidence of $\heps<\eps_t$ disappears. Such a dynamic update of $\heps$ keeps the 
revenue loss bounded. 

To be more specific, $\heps$ decreases by a factor of $\frac{1}{2}$ if the seller has not observed any evidence of $\heps<\eps_t$ for \textit{long enough time}. In particular, the algorithm tries to run $\heps^{-1/2}$ identical phases in Algorithm~\ref{alg:adv-unknownfixedeps-pricing}, and will halve $\heps$ when the buyer passes all checking steps. The algorithm is described in Algorithm~\ref{alg:adv-decreasingeps-pricing}.

\begin{algorithm}[htb]
\caption{Revenue loss minimizing algorithm for adversarial buyer with unknown decreasing changing rate $\eps_t$}
\label{alg:adv-decreasingeps-pricing}
\begin{algorithmic}
\STATE $\heps\leftarrow \frac{1}{2}$
\WHILE{true}{
 \FOR{$\heps^{-1/2}$ phases of length $m_{\heps}=\heps^{-1/2}$}{
    \STATE At the beginning of phase $[t_0+1,t_0+m_{\heps}]$, apply Algorithm~\ref{alg:binarysearch} to locate the current value of the buyer in an interval $[\ell_t,r_t]$ with length $\sqrt{\heps}$
    \STATE Randomly select a $t^*\in[t_0+1,t_0+m_{\heps}]$
    \FOR{each step $t$ in the phase}{
        \IF{$t\neq t^*$}{
            \STATE Price at $p_t\leftarrow \ell_t$
            \IF{$v_t<p_t$}{
                \STATE $\heps\leftarrow 2\heps$ and go back to the beginning of the while loop (terminate the $\heps^{-1/2}$ phases)
            }
            \ENDIF
        }
        \ELSE{
            \STATE Price at $p_t\leftarrow r_t$
            \IF{$v_t\geq p_t$}{
                \STATE $\heps\leftarrow 2\heps$ and go back to the beginning of the while loop (terminate the $\heps^{-1/2}$ phases)
            }
            \ENDIF
        }
        \ENDIF
        \STATE $[\ell_{t+1},r_{t+1}]\leftarrow[\ell_{t}-\eps,r_t+\eps]$
    }
    \ENDFOR
 }
 \ENDFOR
 \STATE $\heps\leftarrow \heps/2$ if $\heps>\frac{1}{T}$
 }
 \ENDWHILE
\end{algorithmic}
\end{algorithm}

Now we analyze the performance of the algorithm. Partition the time horizon $[T]$ to $\log T$ intervals $I_1,\cdots,I_{\log T}$, such that for each time interval $I_i$ and time $t\in I_i$, $\eps_t\in (2^{-i},2^{-i+1}]$. Let $\eps_i^*=2^{-i+1}$. We argue that in time interval $I_i$ the total revenue loss is $\tilde{O}((\eps_i^*)^{-1/2}+|I_i|(\eps_i^*)^{1/2})$.

In interval $I_i$ the algorithm may start with $\heps>\eps_i^*$, and then gradually decreases to reach $\heps=\eps_i^*$ and never become larger than $\eps_i^*$ later. In this process, the revenue loss is $\tilde{O}(1)$ in each phase as shown in the proof of Theorem~\ref{thm:adv-unknownfixedeps-pricing}, thus $\tilde{O}(\heps^{-1/2})$ loss for every value $\heps>\eps_i^*$ and at most $\sum_{\heps>\eps_i^*}\tilde{O}(\heps^{-1/2})=\tilde{O}((\eps_i^*)^{-1/2})$ in total.

Now we show that after $\heps$ reaches $\eps_i^*$, the revenue loss is $\tilde{O}((\eps_i^*)^{1/2})$ per step on average.
First we study the revenue loss from each $\heps^{-1/2}$ phases with changing rate $\heps$.  The same as in previous proofs, a piece of ``bad evidence'', or a piece of evidence of $\heps<\eps$ is the event of $v_{t}<p_t$ in a non-checking step $t\neq t^*$ or $v_t\geq p_t$ in a checking step $t=t^*$. If no evidence of $\heps<\eps$ is detected, then the revenue loss is at most some constant $c=\tilde{O}(1)$ in each phase, thus $c\heps^{-1/2}$ for the $\heps^{-1/2}$ phases with $\heps^{-1}$ time steps. We also observe that if we run the algorithm with changing rate $\eps_i^*$, the revenue loss of such $\heps^{-1}$ time steps is going to be $c(\eps_i^*)^{1/2}$ per step thus $c\heps^{-1}(\eps_i^*)^{1/2}$ in total. 

We argue that the per-step revenue loss in $I_i$ is at most $2c(\eps_i^*)^{1/2}$. In $\heps^{-1/2}$ phases where no bad evidence is found, the algorithm actually gets $c(\heps^{-1}(2\eps_i^*)^{1/2}-\heps^{1/2})>c\heps^{-1}(\eps_i^{*})^{1/2}>c\heps^{-1/2}$ less revenue loss than the expected benchmark ($2c(\eps_i^*)^{1/2}$ per step). In any phase with estimated changing rate $\heps$ where a piece of bad evidence is found, as shown in the analysis of Algorithm~\ref{alg:adv-unknownfixedeps-pricing}, in expectation $m_{\heps}=\heps^{-1/2}$ steps with value out of confidence bound has occurred, and contributes at most $\heps^{-1/2}$ total additional revenue loss more than the normal $2c(\eps_i^*)^{1/2}$ loss per step. Therefore, every time the algorithm goes through $\heps^{-1/2}$ phases without bad evidence, the algorithm has at least $c\heps^{-1/2}$ less revenue loss than expected; every time the algorithm with estimated changing rate $\heps$ finds a phase with a piece of bad evidence, the algorithm has at most $c\heps^{-1/2}$ more revenue loss than expected. Observe that in each iteration with $\heps$ decreases by $\frac{1}{2}$ no bad evidence is detected, and bad evidence is found in each iteration with $\heps$ getting doubled. Thus the number of iterations with no bad evidence being detected is at least the number of iterations with bad evidence found, which means that the algorithm has no more revenue loss than the expected $2c(\eps_i^*)^{1/2}$ per step. To summarize, in $I_i$ after $\heps$ reaches $\eps_i^*$, the revenue loss is $\tilde{O}((\eps_i^*)^{1/2})$ per step.

Above reasoning shows that in each time interval $I_i$, the total revenue loss is $\tilde{O}((\eps_i^*)^{-1/2}+|I_i|(\eps_i^*)^{1/2})$. Sum up over all $i$, the total revenue loss of all time steps is 
\begin{eqnarray*}
& &\sum_{i\leq\log T}\tilde{O}((\eps_i^*)^{-1/2}+|I_i|(\eps_i^*)^{1/2})\\
&=&\tilde{O}(T^{1/2})+\tilde{O}(1)\sum_{i}|I_i|(\eps_i^*)^{1/2}\\
&\leq&To(1)+\tilde{O}(1)\sum_{i}|I_i|\bar{\e}^{1/2}=\tilde{O}(T\bar{\e}^{1/2}),
\end{eqnarray*}
Here the inequality is by Cauchy-Schwarz. Thus the average revenue loss of each time step is $\tilde{O}(\bar{\e}^{1/2})$.

\end{proof}

Such a result can also be extended for a stochastic buyer.

\begin{theorem}\label{thm:rand-decreasingeps-pricing}
If the buyer has stochastic value and non-increasing changing rate $\epsilon_t$ unknown to the seller, then there exists an online pricing algorithm with revenue loss $\tilde{O}(\bar{\e}^{2/3})$, here $\bar{\e}=\frac{1}{T}\sum_{t=1}^{T}\eps_t$. Furthermore, no online algorithm can obtain a revenue loss less than $\Omega(\bar{\e}^{2/3})$.
\end{theorem}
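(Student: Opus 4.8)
The lower bound $\Omega(\bar\e^{2/3})$ is immediate from Theorem~\ref{thm:rand-knownfixedeps-pricing} by taking all $\eps_t$ equal to $\bar\e$. For the upper bound, the plan is to combine the stochastic fixed-unknown-rate algorithm (Algorithm~\ref{alg:rand-unknownfixedeps-pricing}) with the ``optimistic halving'' wrapper from Algorithm~\ref{alg:adv-decreasingeps-pricing}, exactly mirroring how Theorem~\ref{thm:adv-decreasingeps-pricing} extends Theorem~\ref{thm:adv-unknownfixedeps-pricing}. Concretely: maintain a guess $\heps$ starting at $\frac12$; operate in phases of length $m_{\heps}=\heps^{-2/3}$, where each phase first re-localizes the value to an interval $[\ell,r]$ of length $O(\eps)$ via Algorithm~\ref{alg:binarysearch}, then on every non-checking step prices at $\ell-\delta$ and on one uniformly random checking step prices at $r+\delta$, with $\delta=\tilde\Theta(\heps^{2/3})$ (e.g. $4\heps^{2/3}\sqrt{\log T}$, or $4\heps^{2/3}\log^4\frac1{\heps}$ for the sharper polylog). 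We double $\heps$ and abort the current block on any bad evidence --- $v_t<p_t$ on a normal step or $v_t\ge p_t$ on the checking step --- and, in the other direction, \emph{optimistically} halve $\heps$ (while $\heps>\frac1T$) once $\heps^{-2/3}$ consecutive phases have elapsed with no bad evidence.

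The analysis reuses the interval-based accounting of Theorem~\ref{thm:adv-decreasingeps-pricing}. Assuming WLOG $\eps_t\ge\frac1T$, partition $[T]$ into $\le\log T$ blocks $I_i$ on which $\eps_t\in(2^{-i},2^{-i+1}]$ and set $\eps_i^*=2^{-i+1}$; the target is to show the revenue loss inside $I_i$ is $\tilde O\big((\eps_i^*)^{-2/3}+|I_i|(\eps_i^*)^{2/3}\big)$. While $\heps\ge\eps_i^*$, Azuma's inequality (the same computation as \eqref{eqn:Azmua}, and using the $\delta$ above so the per-step failure probability is $\le 1/T^2$) gives that within a phase the value stays in $[\ell-\delta,r+\delta]$ w.h.p., so no bad evidence is triggered and, as in Theorem~\ref{thm:rand-knownfixedeps-pricing}, the loss is $O(\eps+\delta)=\tilde O((\eps_i^*)^{2/3})$ per step plus an amortized $O(\log\frac1{\heps})=\tilde O(1)$ per phase for re-localization; the rare event of a spurious doubling contributes only $O(1)$ total. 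The descent of $\heps$ down to $\eps_i^*$ at the start of $I_i$ costs $\sum_{\heps\ge\eps_i^*}\tilde O(\heps^{-2/3})=\tilde O((\eps_i^*)^{-2/3})$ by the geometric sum. While $\heps<\eps_i^*$ (an overshoot of the optimistic halving) the analysis of Algorithm~\ref{alg:rand-unknownfixedeps-pricing} applies verbatim: a downward drift is caught at a normal step and an upward drift exceeding $\delta$ is caught with probability $1/m_{\heps}$ at the checking step, so in expectation at most $m_{\heps}=\heps^{-2/3}$ bad events (each costing $O(1)$) occur before $\heps$ doubles back, i.e. $\tilde O(\heps^{-2/3})$ extra loss per overshoot. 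Finally, exactly as in Theorem~\ref{thm:adv-decreasingeps-pricing}, each overshoot/doubling iteration is charged against a preceding no-bad-evidence iteration (of $\Omega((\eps_i^*)^{-4/3})$ steps), whose surplus relative to the $\tilde O((\eps_i^*)^{2/3})$-per-step benchmark dominates the $\tilde O((\eps_i^*)^{-2/3})$ debit; since within $I_i$ doublings are always preceded by halvings, there are at least as many of the former as of the latter, and the total loss in $I_i$ stays within $\tilde O\big((\eps_i^*)^{-2/3}+|I_i|(\eps_i^*)^{2/3}\big)$.

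Summing over $i$ finishes the proof: $\sum_i(\eps_i^*)^{-2/3}=O(T^{2/3})=T\cdot o(1)$ (geometric, using $\eps_t\ge\frac1T$), and $\sum_i|I_i|(\eps_i^*)^{2/3}\le O(1)\sum_t\eps_t^{2/3}\le O(T\bar\e^{2/3})$ by concavity of $x\mapsto x^{2/3}$ (Jensen), equivalently Hölder with exponents $3$ and $\frac32$. Hence the total revenue loss is $\tilde O(T\bar\e^{2/3})$, i.e. $\tilde O(\bar\e^{2/3})$ per step.

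The step I expect to be the main obstacle is the amortization when $\eps_t$ drops faster than $\heps$ can track it: the optimistic halving only fires after $\heps^{-2/3}$ clean phases, so $\heps$ may lag the true rate across a short block $I_i$, and one must verify that the lag is absorbed into the $\tilde O((\eps_i^*)^{-2/3})$ term (which is why the bound is stated per block rather than globally) and that the $\heps$ handed off between consecutive blocks, together with the Azuma concentration --- which must hold simultaneously over all the data-dependent phase lengths actually played, including at temporarily-too-small scales $\heps<\eps_i^*$ --- does not leak an extra $\log T$ factor beyond the claimed $\tilde O$.
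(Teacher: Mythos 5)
Your proposal is correct and follows essentially the same route as the paper: combine the stochastic unknown-fixed-rate algorithm with the optimistic-halving wrapper of the adversarial decreasing-rate algorithm (iterations of $\heps^{-2/3}$ phases of length $\heps^{-2/3}$), bound the loss per dyadic block $I_i$ by $\tilde O\bigl((\eps_i^*)^{-2/3}+|I_i|(\eps_i^*)^{2/3}\bigr)$ via the same Azuma-based good/bad-event accounting and amortization, and sum with H\"older/Jensen. In fact your write-up supplies the checking-step and charging details that the paper's own (very terse) proof delegates to the analyses of Theorems~\ref{thm:rand-unknownfixedeps-pricing} and~\ref{thm:adv-decreasingeps-pricing}.
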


\begin{proof}
The $\Omega(\bar{\e}^{2/3})$ tightness follows from Theorem~\ref{thm:rand-knownfixedeps-pricing}. The algorithm for $\tilde{O}(\bar{\e}^{2/3})$ revenue loss can be obtained by combining Algorithm~\ref{alg:rand-unknownfixedeps-pricing} for stochastic buyer with unknown fixed changing rate,  and Algorithm~\ref{alg:adv-decreasingeps-pricing} for adversarial buyer with unknown decreasing changing rate. In particular, we replace each iteration of fixed $\heps$ with $\heps^{-1/2}$ phases of length $\heps^{-1/2}$, by $\heps^{-2/3}$ phases of length $\heps^{-2/3}$ in Algorithm~\ref{alg:rand-unknownfixedeps-pricing}.

The analysis of the algorithm is identical to the analysis of Algorithm~\ref{alg:adv-decreasingeps-pricing}. Partition the time horizon $[T]$ to intervals $I_1,\cdots,I_{\log T/2}$, each with $\eps_t\in(2^{-i},2^{-i+1}=\eps_i^*]$ for any $t\in I_i$. We can prove that the total revenue loss in each time interval is $\tilde{O}((\eps_i^*)^{-2/3}+|I_i|(\eps_i^*)^{2/3})$. The total revenue loss of the algorithm is
\begin{eqnarray*}
& &\sum_{i\leq\log T}\tilde{O}((\eps_i^*)^{-2/3}+|I_i|(\eps_i^*)^{2/3})\\
&=&\tilde{O}(T^{2/3})+\tilde{O}(1)\sum_{i}|I_i|(\eps_i^*)^{2/3}\\
&\leq&To(1)+\tilde{O}(1)\sum_{i}|I_i|\left(\frac{\sum_i |I_i|\eps_i^*}{\sum_i |I_i|}\right)^{2/3}\\
&=&\tilde{O}(T\bar{\e}^{2/3}),
\end{eqnarray*}
Here the inequality follows from H\"{o}lder's inequality and $\bar{\e}=\frac{1}{T}\sum_t\eps_t$. Thus the average revenue loss of each time step is $\tilde{O}(\bar{\e}^{2/3})$.
\end{proof}

For the revenue loss minimization problem, it is hard to obtain positive results when the changing rates $\eps_t$ are arbitrary, since setting a price slightly higher than the true value in a step can result in a huge revenue loss. Surprisingly, even if $\eps_t$ for each time step can change arbitrarily, we can still achieve the $\tilde{O}(\bar{\eps})$ loss in previous sections, only losing a tiny $O(\log T)$ factor.

\begin{theorem}\label{thm:adv-unknownchangingeps-symmetric}

If the buyer has adversarial values and dynamic changing rate $\epsilon_t$ unknown to the seller, there exists an online pricing algorithm with symmetric loss $\tilde{O}(\bar{\eps} \log T)$ for $\bar{\eps}=\frac{1}{T}\sum_{t\in [T]}\eps_t$. Further, no online algorithm can obtain a symmetric loss less than $\Omega(\bar{\eps})$.

\end{theorem}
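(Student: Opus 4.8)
The lower bound $\Omega(\bar\epsilon)$ is immediate from Theorem~\ref{thm:adv-knownfixedeps-symmetric} by taking all $\epsilon_t$ equal, so the real task is the matching upper bound. The plan is to start from the adaptive‑guess algorithm of Theorem~\ref{thm:adv-decreasingeps-symmetric} --- three‑step rounds that price at $\ell_t$, then at $r_t+\hat\epsilon$, then at the midpoint; double $\hat\epsilon$ whenever a checking step reveals $v<\ell$ or $v\ge r+\hat\epsilon$; and optimistically halve $\hat\epsilon$ after a run of consecutive ``good'' rounds --- and adapt the confidence‑interval maintenance to the non‑monotone setting. The crucial change is that when $\hat\epsilon$ is doubled we can no longer reconstruct the interval from the last good round's interval (the step in Algorithm~\ref{alg:adv-decreasingeps-symmetric} that does this exploits $\epsilon_s\le\epsilon_{\text{past good round}}$, which fails when $\epsilon_t$ can grow): instead, after a doubling cascade I would re‑initialize the interval to $[0,1]$ and re‑localize via the binary search of Algorithm~\ref{alg:binarysearch}, continuing to double $\hat\epsilon$ inside the cascade until the new guess is consistent with the observed feedback.

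For the analysis I would partition $[T]$ into the $\le\log T$ classes $I_i=\{t:\epsilon_t\in(2^{-i},2^{-i+1}]\}$ (lumping $\epsilon_t\le 1/T$ into the last class); unlike in Theorem~\ref{thm:adv-decreasingeps-symmetric} these are now unions of many intervals. Against this partition the per‑round symmetric loss splits into (i) a ``good‑round'' term $O(\text{interval length}+\hat\epsilon_k+\epsilon_{3k+1})$, which after the binary search is $O(\hat\epsilon_k+\epsilon_{3k+1})$; (ii) a ``bad‑round'' term of $O(1)$ per step, incurred only on rounds that trigger a doubling; and (iii) the $O(1)$ re‑localization cost each time a cascade ends. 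The $\sum_k\epsilon_{3k+1}$ part of (i) is $O(T\bar\epsilon)$. For (ii) and (iii) I would charge each doubling cascade to the single jump $\epsilon_s$ that caused it (a cascade terminating at guess $\hat\epsilon$ has $\hat\epsilon=O(\epsilon_s)$, since we stop doubling once the interval can absorb the jump), and bound the ``overshoot'' loss it produces --- the loss while $\hat\epsilon$ optimistically descends back down afterward --- by $O(\text{polylog}\cdot\epsilon_s)$ by the geometric decay of $\hat\epsilon$ across levels; summing over cascades gives $O(\text{polylog}\cdot T\bar\epsilon)$.

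The heart of the argument, and what forces the extra $\log T$, is controlling $\sum_k\hat\epsilon_k$, i.e.\ showing the algorithm's guess is on average at most $\tilde O(\log T)$ times the truth. Because $\epsilon_t$ is non‑monotone, $\hat\epsilon$ can descend to a level, be kicked back up by a later spike, and descend again arbitrarily often, so the clean ``descend once per level'' accounting of Theorem~\ref{thm:adv-decreasingeps-symmetric} breaks. My plan is: bound the number of non‑transient visits of $\hat\epsilon$ to level $i$ by the number of doublings that land in levels $\le i$ (each return to level $i$ from below requires one such doubling); bound that number by (the number of jumps $\epsilon_s\gtrsim 2^{-i}$) times $O(\log T)$; note each visit lasts $O(\text{polylog})$ rounds with per‑round loss $O(2^{-i})$; and sum $\sum_i \#\{s:\epsilon_s\gtrsim 2^{-i}\}\cdot 2^{-i}\cdot\text{polylog}\le \text{polylog}\cdot\log T\cdot\sum_s\epsilon_s = \tilde O(\bar\epsilon\log T)\cdot T$.

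I expect the main obstacle to be making this last charging scheme precise: ensuring that each jump $\epsilon_s$ is charged only $O(1)$ times across the interleaved doubling/halving bookkeeping, handling the interaction between the $1/T$ floor on $\hat\epsilon$ and the optimistic‑halving schedule (so that early, overly pessimistic guesses contribute only $o(T)$), and verifying that re‑localizing after a cascade costs $O(1)$ rather than $\Omega(\log T)$ per cascade. I would also need to double‑check that the remaining $\log T$ factor is genuinely tied to the non‑monotone re‑entries and cannot be absorbed into $\mathrm{polylog}(1/\bar\epsilon)$, which is what distinguishes this statement from Theorem~\ref{thm:adv-decreasingeps-symmetric}.
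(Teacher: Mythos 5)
Your lower bound is fine, but the upper-bound plan has a concrete gap in the charging scheme, and it is precisely at the step you flag as "the main obstacle," so what you have is a plan rather than a proof. First, the modification you introduce --- re-initializing the confidence interval to $[0,1]$ after a doubling cascade and re-localizing via Algorithm~\ref{alg:binarysearch} --- costs $\Theta(1)$ symmetric loss per cascade, yet a cascade can be triggered by a single jump $\epsilon_s$ only slightly larger than the current guess (which the optimistic halving drives down to $1/T$). You propose to charge each cascade $O(\mathrm{polylog}\cdot\epsilon_s)$ to its triggering jump, but an $\Omega(1)$ re-localization cost cannot be charged to an arbitrarily small $\epsilon_s$, and the number of cascades is not bounded by $\tilde O(T\bar\epsilon)$ (consider spikes of size $T^{-1/2}$ spaced $\mathrm{polylog}(T)$ steps apart). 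Second, with the $[0,1]$ re-initialization the rule "keep doubling $\hat\epsilon$ until the guess is consistent with feedback" no longer pins down $\hat\epsilon$: once the interval is $[0,1]$ every check passes, so the invariant you rely on ($\hat\epsilon=O(\epsilon_s)$ at the end of a cascade) is not established; conversely, the local re-computation of Algorithm~\ref{alg:adv-decreasingeps-symmetric} that you discard does not actually fail for growing $\epsilon_t$ provided you keep doubling until the expanded interval is verified. Third, the central counting claim --- that the number of returns of $\hat\epsilon$ to level $i$ is $O(\log T)$ times the number of jumps $\epsilon_s\gtrsim 2^{-i}$ --- is asserted, not proved, and the interleaving of halvings, doubling chains with geometrically growing jumps, and the $1/T$ floor is exactly where this can break.

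For comparison, the paper's proof avoids a persistent estimate $\hat\epsilon$ and all of this bookkeeping. It works in phases of at most $2\log T+3$ steps: within a phase it probes at $\ell-\delta_j$ and $r+\delta_j$ for $\delta_j=2^jT^{-1}$, $j=0,1,2,\dots$, until the value is bracketed, then prices at the midpoint to halve the interval. Two structural facts do all the work: every step of a phase, including failed probes, has loss $O(a_i+\delta_{j^*}+b_i)$ where $a_i$ is the interval length, $b_i$ the total variation in the phase, and $\delta_{j^*}$ the terminal probe width; and since the search did not stop at $j^*-1$, one has $\delta_{j^*}\le 2(b_{i-1}+b_i)$. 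This yields $a_{i+1}\le\frac12 a_i+2(b_i+b_{i-1})$, and since each $b_i$ is charged $O(\log T)$ times (the phase length), the total loss is $O(\log T)\sum_t\epsilon_t$, i.e.\ $\tilde O(\bar\epsilon\log T)$ on average, with the $\log T$ coming solely from serializing $O(\log T)$ probes per "virtual" step. If you want to rescue your route, you would need either to replace the global $[0,1]$ re-localization by a local re-expansion whose cost scales with the triggering jump, or to prove the missing cascade-counting lemma; as written, neither is done.
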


\begin{proof}

Suppose for a moment that the algorithm is allowed to set multiple pricing queries for a single time step. The algorithm maintains a correct confidence interval $[\ell_{t},r_{t}]$ that contains the value $v_t$ of the buyer at each time step. At time $t+1$, the seller does not know the exact value change $v_{t+1}-v_t$. Furthermore, she also does not know a bound of the value change $\eps_t\geq|v_{t+1}-v_t|$. However, the seller can try to price at $\ell_t-\delta_j$ and $r_t+\delta_j$ repeatedly for every $j$ and $\delta_j=2^{j}T^{-1}$. When $j$ has increased such that the algorithm finds that $\ell_t-\delta_j<v_{t+1}<r_t+\delta_j$, the seller can then price at $\frac{\ell_t+r_t}{2}$ to get a new correct confidence interval $[\ell_{t+1},r_{t+1}]\leftarrow[\ell_t-\delta_j,\frac{\ell_t+r_t}{2}]$ or $[\frac{\ell_t+r_t}{2},r_t+\delta_j]$. Let $a_t=r_t-\ell_t$ be the length of the confidence interval at time $t$, then $a_{t+1}=\frac{1}{2}a_t+\delta_j$. Since $v_{t+1}$ and all prices at this time step are in $[\ell_t-\delta_j,r_t+\delta_j]$, thus the symmetric loss of each query is at most $2a_t+2\delta_j=4a_{t+1}$. Also notice that $v_{t+1}\not\in[\ell_t-\delta_j/2,r_t+\delta_j/2]$ since the algorithm does not stop at $j-1$ at time $t$, thus $\eps_t>\frac{\delta_j}{2}$ as the buyer's value must has changed by $\frac{\delta_j}{2}$ at this step. Therefore the symmetric loss of each time step $t+1$ is at most $4a_{t+1}$, while $a_{t+1}\leq \frac{1}{2}a_t+\eps_t$. Then the total symmetric loss of the algorithm is $\leq 4\sum_{t\in[T]}a_t\leq 8\sum_{t\in [T]}\eps_t$, which implies the average symmetric loss is $O(\bar{\eps})$.

However, we are not allowed to have multiple pricing queries for the same value. The key observation to be proved in this section is that when we serialize the pricing queries in such an algorithm with at most $k$ queries per step, the symmetric loss only increases by a factor of $O(k)$. Since the above algorithm has at most $O(\log T)=\tilde{O}(1)$ pricing queries in each step, the serialized algorithm's symmetric loss only increases by $\tilde{O}(1)$.

The serialized algorithm runs as follows. The algorithm runs in phases, with each phase $i$ having time steps $[t_i+1,t_{i+1}]$ for a to-be-determined stopping time $t_{i+1}$. In each phase, the algorithm starts with an estimated confidence interval $[\ell_{t_i+1},r_{t_i+1}]$, and repeatedly set prices $\ell_{t_i+1}-\delta_j$ and $r_{t_i+1}+\delta_j$ for every two time steps. If the algorithm observes $v_{t_i+2j+1}\geq \ell_{t_i+1}-\delta_j$ and $v_{t_i+2j+2}<r_{t_i+1}+\delta_j$, the phase stops at time $t_{i+1}=t_i+2j+3$ with price $p_{t_{i+1}}=\frac{\ell_{t_i+1}+r_{t_i+1}}{2}$ trying to halve the length of the confidence interval.

\begin{algorithm}[htbp]
\caption{Symmetric loss minimizing algorithm for adversarial buyer with unknown dynamic changing rate $\eps_t$}
\label{alg:thm:adv-unknownchangingeps-symmetric}
\begin{algorithmic}
 \STATE $[\ell_1,r_1]\leftarrow [0,1]$
 \STATE Let $t_1+1=1$ be the starting time of the first phase
 \FOR{each phase $i$ of time interval $[t_i+1,t_{i+1}]$ with to-be-determined stopping time $t_{i+1}$}{
    \STATE Let $t_i+1$ be the starting time step of the phase
    \FOR{each two consecutive time steps $t_i+2j+1,t_i+2j+2$ with $j\geq 0$}{
        \STATE $\heps\leftarrow\delta_j=2^jT^{-1}$
        \STATE Set price $p_{t_i+2j+1}\leftarrow \ell_{t_i+1}-\delta_j$
        \STATE Set price $p_{t_i+2j+2}\leftarrow r_{t_i+1}+\delta_j$
        \IF{the seller finds $v_{t_i+2j+1}\geq p_{t_i+2j+1}$ and $v_{t_i+2j+2}< p_{t_i+2j+2}$}{
            \STATE break (from this for loop)
        }
        \ENDIF
    }
    \ENDFOR
    \STATE $t_{i+1}\leftarrow t_i+2j+3$
    \STATE Set $p_{t_{i+1}}\leftarrow \frac{\ell_{t_i+1}+r_{t_i+1}}{2}$
    \IF{$v_{t_{i+1}}<p_{t_{i+1}}$}{
        \STATE $[\ell_{t_{i+1}+1},r_{t_{i+1}+1}]\leftarrow [\ell_{t_i+1}-\delta_j,p_{t_{i+1}}]$
    }
    \ELSE{
        \STATE $[\ell_{t_{i+1}+1},r_{t_{i+1}+1}]\leftarrow [p_{t_{i+1}},r_{t_i+1}+\delta_j]$
    }
    \ENDIF
 }
 \ENDFOR
\end{algorithmic}
\end{algorithm}

Now we analyze the symmetric loss of each phase $I_i=[t_i+1,t_{i+1}]$. The first observation is that at each time step, the true value may not be in the confidence interval, due to the delay of prices. The confidence interval at time $t_i+1$ is $[\ell_{t_i+1},r_{t_i+1}]$ in the algorithm, but since the value of $\ell_{t_i+1}$ and $r_{t_i+1}$ are determined in the previous three time steps, thus the ``true'' confidence interval can be defined by $v_{t_i+1}\in [\ell_{t_i+1}-\eps_{t_i-2}-\eps_{t_i-1}-\eps_{t_i},r_{t_i+1}+\eps_{t_i-2}+\eps_{t_i-1}+\eps_{t_i}]$. Furthermore, for any time $t'\in I_i$, $v_{t'}\in [\ell_{t_i+1}-\sum_{t_i-2\leq t< t_{i+1}}\eps_t,r_{t_i+1}+\sum_{t_i-2\leq t<t_{i+1}}\eps_t]$. 

Observe that the algorithm in phase $i$ breaks at $j$,
thus $v_{t_i+2j+1}\geq p_{t_i+2j+1}=\ell_{t_i+1}-\delta_{j}$ and $v_{t_i+2j+2}<p_{t_i+2j+2}=r_{t_i+1}+\delta_{j}$. Thus any price in the phase is in range $[\ell_{t_i+1}-\delta_{j},\ell_{t_i+1}+\delta_{j}]$, while any value $v_t$ in the phase is in range $[v_{t_i+2j+1}-\sum_{t\in I_i}\eps_t,v_{t_i+2j+2}+\sum_{t\in I_i}\eps_t]\subseteq[\ell_{t_i+1}-\delta_{j}-\sum_{t\in I_i}\eps_t,r_{t_i+1}+\delta_{j}+\sum_{t\in I_i}\eps_t]$, so the symmetric loss at any time step in the phase is at most $r_{t_i+1}-\ell_{t_i+1}+2\delta_j+\sum_{t\in I_i}\eps_t$. 

Also since the algorithm does not break at $j-1$ in the for loop, thus either $v_{t_i+2j-1}<p_{t_i+2j-1}=\ell_{t_i+1}-\delta_{j-1}$, or $v_{t_i+2j}>p_{t_i+2j}=r_{t_i+1}+\delta_{j-1}$. In either case, $\sum_{t_i-2\leq t<t_{i+1}}\eps_t>\delta_{j-1}=\frac{1}{2}\delta_j$.

Let $a_i=r_{t_i+1}-\ell_{t_i+1}$ be the length of the confidence interval at the beginning of phase $i$, and $b_i=\sum_{t\in I_i}\eps_t$ be the total changing rate of the value in phase $i$. Then according to the binary search step, $a_{i+1}=\frac{1}{2}a_i+\delta_{j_i}$ where $j_i$ is the value of $j$ at the end of phase $i$. Since $\delta_{j_i}=2\delta_{j_i-1}<\sum_{t_i-2\leq t<t_{i+1}}\eps_t<b_{i-1}+b_i$, we have
\begin{equation*}
    a_{i+1}<\frac{1}{2}a_i+2(b_i+b_{i-1}).
\end{equation*}
Then $a_{i+1}$ can be represented using only $b_i$'s as follows:
\begin{equation}\label{eqn:arbitrary-recursion}
    a_{i+1}<2(b_i+b_{i-1})+(b_{i-1}+b_{i-2})+\frac{1}{2}(b_{i-2}+b_{i-3})+\cdots
\end{equation}

Since the total symmetric loss in phase $i$ is upper bounded by $a_{i+1}+\sum_{t\in I_i}\eps_t$ per step thus $|I_i|(a_{i+1}+b_i)\leq (2\log T+1)(a_{i+1}+b_i)$ in total, summing over all phases we get the total symmetric loss of the algorithm for all phases is
$L:=(2\log T+1)\sum_i(a_{i+1}+b_i)$. $L$ can be represented with only $b_i$ by applying \eqref{eqn:arbitrary-recursion} to $L$. Since the coefficient of $b_i$ in L is $O(\log T)$ for each $b_i$, thus $L=O(\log T)\sum_i b_i=O(\log T)\sum_{t\in [T]}\eps_t$. Therefore the average symmetric loss of the algorithm is $\tilde{O}(\bar\eps)$.

\end{proof}


\bibliographystyle{plainnat}
\bibliography{pricing}

\begin{thebibliography}{23}
\providecommand{\natexlab}[1]{#1}
\providecommand{\url}[1]{\texttt{#1}}
\expandafter\ifx\csname urlstyle\endcsname\relax
  \providecommand{\doi}[1]{doi: #1}\else
  \providecommand{\doi}{doi: \begingroup \urlstyle{rm}\Url}\fi

\bibitem[Amin et~al.(2014)Amin, Rostamizadeh, and Syed]{amin2014repeated}
Kareem Amin, Afshin Rostamizadeh, and Umar Syed.
\newblock Repeated contextual auctions with strategic buyers.
\newblock In \emph{Advances in Neural Information Processing Systems 27: Annual
  Conference on Neural Information Processing Systems 2014, December 8-13 2014,
  Montreal, Quebec, Canada}, pages 622--630, 2014.

\bibitem[Auer et~al.(1995)Auer, Cesa-Bianchi, Freund, and
  Schapire]{auer1995gambling}
Peter Auer, Nicolo Cesa-Bianchi, Yoav Freund, and Robert~E Schapire.
\newblock Gambling in a rigged casino: The adversarial multi-armed bandit
  problem.
\newblock In \emph{Proceedings of IEEE 36th Annual Foundations of Computer
  Science}, pages 322--331. IEEE, 1995.

\bibitem[Cesa-Bianchi et~al.(2019)Cesa-Bianchi, Cesari, and
  Perchet]{cesa2019dynamic}
Nicolo Cesa-Bianchi, Tommaso Cesari, and Vianney Perchet.
\newblock Dynamic pricing with finitely many unknown valuations.
\newblock In \emph{Algorithmic Learning Theory}, pages 247--273. PMLR, 2019.

\bibitem[Chawla et~al.(2016)Chawla, Devanur, Karlin, and Sivan]{CDKS16}
Shuchi Chawla, Nikhil~R. Devanur, Anna~R. Karlin, and Balasubramanian Sivan.
\newblock Simple pricing schemes for consumers with evolving values.
\newblock In Robert Krauthgamer, editor, \emph{Proceedings of the
  Twenty-Seventh Annual {ACM-SIAM} Symposium on Discrete Algorithms, {SODA}
  2016, Arlington, VA, USA, January 10-12, 2016}, pages 1476--1490. {SIAM},
  2016.

\bibitem[Cohen et~al.(2016)Cohen, Lobel, and Paes~Leme]{cohen2016feature}
Maxime~C Cohen, Ilan Lobel, and Renato Paes~Leme.
\newblock Feature-based dynamic pricing.
\newblock In \emph{Proceedings of the 2016 ACM Conference on Economics and
  Computation}, pages 817--817. ACM, 2016.

\bibitem[Daniely et~al.(2015)Daniely, Gonen, and
  Shalev-Shwartz]{daniely2015strongly}
Amit Daniely, Alon Gonen, and Shai Shalev-Shwartz.
\newblock Strongly adaptive online learning.
\newblock In \emph{International Conference on Machine Learning}, pages
  1405--1411. PMLR, 2015.

\bibitem[Devanur et~al.(2019)Devanur, Peres, and Sivan]{DPS19}
Nikhil~R. Devanur, Yuval Peres, and Balasubramanian Sivan.
\newblock Perfect bayesian equilibria in repeated sales.
\newblock \emph{Games Econ. Behav.}, 118:\penalty0 570--588, 2019.

\bibitem[Foster et~al.(2016)Foster, Li, Lykouris, Sridharan, and
  Tardos]{foster2016learning}
Dylan~J Foster, Zhiyuan Li, Thodoris Lykouris, Karthik Sridharan, and {\'E}va
  Tardos.
\newblock Learning in games: robustness of fast convergence.
\newblock In \emph{Proceedings of the 30th International Conference on Neural
  Information Processing Systems}, pages 4734--4742, 2016.

\bibitem[Hazan and Seshadhri(2007)]{hazan2007adaptive}
Elad Hazan and Comandur Seshadhri.
\newblock Adaptive algorithms for online decision problems.
\newblock In \emph{Electronic colloquium on computational complexity (ECCC)},
  volume~14, 2007.

\bibitem[Herbster and Warmuth(2001)]{herbster2001tracking}
Mark Herbster and Manfred~K Warmuth.
\newblock Tracking the best linear predictor.
\newblock \emph{Journal of Machine Learning Research}, 1\penalty0
  (281-309):\penalty0 10--1162, 2001.

\bibitem[Javanmard and Nazerzadeh(2019)]{nazerzadeh2016}
Adel Javanmard and Hamid Nazerzadeh.
\newblock Dynamic pricing in high-dimensions.
\newblock \emph{Journal of Machine Learning Research}, 20\penalty0
  (9):\penalty0 1--49, 2019.
\newblock URL \url{http://jmlr.org/papers/v20/17-357.html}.

\bibitem[Kakade et~al.(2013)Kakade, Lobel, and Nazerzadeh]{KLN13}
Sham~M. Kakade, Ilan Lobel, and Hamid Nazerzadeh.
\newblock Optimal dynamic mechanism design and the virtual-pivot mechanism.
\newblock \emph{Oper. Res.}, 61\penalty0 (4):\penalty0 837--854, 2013.

\bibitem[Kleinberg and Leighton(2003)]{kleinberg2003value}
Robert Kleinberg and Tom Leighton.
\newblock The value of knowing a demand curve: Bounds on regret for online
  posted-price auctions.
\newblock In \emph{Foundations of Computer Science, 2003. Proceedings. 44th
  Annual IEEE Symposium on}, pages 594--605. IEEE, 2003.

\bibitem[Krishnamurthy et~al.(2020)Krishnamurthy, Lykouris, Podimata, and
  Schapire]{krishnamurthy2020contextual}
Akshay Krishnamurthy, Thodoris Lykouris, Chara Podimata, and Robert Schapire.
\newblock Contextual search in the presence of irrational agents.
\newblock \emph{arXiv preprint arXiv:2002.11650}, 2020.

\bibitem[Leme and Schneider(2018)]{Intrinsic18}
Renato~Paes Leme and Jon Schneider.
\newblock Contextual search via intrinsic volumes.
\newblock In \emph{59th {IEEE} Annual Symposium on Foundations of Computer
  Science, {FOCS} 2018, Paris, France, October 7-9, 2018}, pages 268--282,
  2018.
\newblock \doi{10.1109/FOCS.2018.00034}.
\newblock URL \url{https://doi.org/10.1109/FOCS.2018.00034}.

\bibitem[Liu et~al.(2021)Liu, Leme, and Schneider]{liu21}
Allen Liu, Renato~Paes Leme, and Jon Schneider.
\newblock Optimal contextual pricing and extensions.
\newblock pages 1059--1078, 2021.
\newblock \doi{10.1137/1.9781611976465.66}.
\newblock URL \url{https://epubs.siam.org/doi/abs/10.1137/1.9781611976465.66}.

\bibitem[Lobel et~al.(2017)Lobel, Leme, and Vladu]{lobel2016multidimensional}
Ilan Lobel, Renato~Paes Leme, and Adrian Vladu.
\newblock Multidimensional binary search for contextual decision-making.
\newblock \emph{Operations Research}, 2017.

\bibitem[Luo and Schapire(2015)]{luo2015achieving}
Haipeng Luo and Robert~E Schapire.
\newblock Achieving all with no parameters: Adanormalhedge.
\newblock In \emph{Conference on Learning Theory}, pages 1286--1304. PMLR,
  2015.

\bibitem[Lykouris et~al.(2018)Lykouris, Sridharan, and
  Tardos]{lykouris2018small}
Thodoris Lykouris, Karthik Sridharan, and {\'E}va Tardos.
\newblock Small-loss bounds for online learning with partial information.
\newblock In \emph{Conference on Learning Theory}, pages 979--986. PMLR, 2018.

\bibitem[Paes~Leme et~al.(2020)Paes~Leme, Sivan, and Teng]{paes2020competitive}
Renato Paes~Leme, Balasubramanian Sivan, and Yifeng Teng.
\newblock Why do competitive markets converge to first-price auctions?
\newblock In \emph{Proceedings of The Web Conference 2020}, pages 596--605,
  2020.

\bibitem[Pavan et~al.(2014)Pavan, Segal, and Toikka]{PST14}
Alessandro Pavan, Ilya Segal, and Juuso Toikka.
\newblock Dynamic mechanism design: A myersonian approach.
\newblock \emph{Econometrica}, 82\penalty0 (2):\penalty0 601--653, 2014.
\newblock \doi{https://doi.org/10.3982/ECTA10269}.
\newblock URL \url{https://onlinelibrary.wiley.com/doi/abs/10.3982/ECTA10269}.

\bibitem[Shah et~al.(2019)Shah, Johari, and Blanchet]{virag09}
Virag Shah, Ramesh Johari, and Jose Blanchet.
\newblock Semi-parametric dynamic contextual pricing.
\newblock 32:\penalty0 2363--2373, 2019.
\newblock URL
  \url{https://proceedings.neurips.cc/paper/2019/file/363763e5c3dc3a68b399058c34aecf2c-Paper.pdf}.

\bibitem[Shreve(2004)]{shreve2004stochastic}
Steven~E Shreve.
\newblock \emph{Stochastic calculus for finance II: Continuous-time models},
  volume~11.
\newblock Springer Science \& Business Media, 2004.

\end{thebibliography}

\appendix

\section{Comparison to Fixed Price Benchmark}
\begin{theorem}\label{thm:beatkl}
The no-regret algorithm for an adversarial buyer in \cite{kleinberg2003value} can have $\Omega(1)$ revenue loss.
\end{theorem}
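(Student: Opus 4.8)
Instantiate the ``sawtooth'' instance from the average-versus-total-loss comment: fix a small constant $\eps>0$, let $v_1=0$, increase by $\eps$ per step up to $1$, decrease by $\eps$ per step down to $0$, and repeat; a full cycle has length $2/\eps$, there are $m=T\eps/2\,(1\pm o(1))$ cycles, and in a full cycle the set of steps with $v_t\ge p$ has density $1-p$ (up to an $O(\eps)$ error) and is spread evenly through the cycle. Hence $\opt=\sum_t v_t = \tfrac T2 - o(T)$, while every fixed price $p$ earns $T\,p(1-p)(1+o(1))\le \tfrac T4 + o(T)$ for $\eps$ small. The algorithm of \cite{kleinberg2003value} is no-regret only as a \emph{lower} bound, $\rev\ge \tfrac T4 - o(T)$; the content of the theorem is the matching \emph{upper} bound $\E[\rev]\le \tfrac T4 + o(T)$ on the algorithm's own revenue, from which $\E[\opt-\rev]\ge \tfrac T4 - o(T)=\Omega(T)$, i.e.\ average revenue loss $\Omega(1)$. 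The plan is to prove this upper bound by showing that on the sawtooth the algorithm effectively plays a \emph{phase-independent} distribution over prices, and that any fixed distribution earns at most $\tfrac14$ per step.

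\textbf{The sampling distribution is near-stationary within a cycle.} The adversarial algorithm of \cite{kleinberg2003value} discretizes $[0,1]$ into $K$ prices and at step $t$ draws $p_t$ from a distribution $\pi_t$ (exponential weights with a small uniform-exploration floor) determined by the importance-weighted cumulative reward estimates $\hat G_{t-1}(\cdot)$. On the sawtooth the \emph{true} cumulative reward of price $p$ through step $t$ is $t\,p(1-p)(1+o(1))$ --- a profile independent of where $t$ sits inside its cycle --- so over any window of $2/\eps$ consecutive steps each statistic moves by only $O(1/\eps)$. A standard EXP3-style martingale (Freedman) bound shows $\hat G_t$ tracks the true cumulative rewards up to lower-order terms with high probability, and the learning rate $\eta$ is polynomially small in $T$ (hence $o(\eps)$); multiplying an $O(1/\eps)$ change in $\hat G$ by $\eta$ perturbs the weights, and thus $\pi_t$, by a $1+o(1)$ factor across a single cycle --- \emph{except} on the rare events where a price with currently tiny $\pi_t(p)$ is sampled and its importance weight spikes. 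A union bound over such events (controlled because $\sum_t\pi_t(p)$ is the expected play count of $p$ and $\sum_p\sum_t\pi_t(p)=T$) bounds their total contribution to $\E[\rev]$ by $o(T)$, while at every other step $\pi_t$ is within total variation $o(1)$ of the distribution $\bar\pi_j$ at the first step of the current cycle $C_j$.

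\textbf{Conclusion.} Conditioning on the high-probability event above and fixing a full cycle $C_j$, since the steps of $C_j$ with $v_t\ge p$ have density $1-p+O(\eps)$ and are spread through $C_j$ while $\pi_t\approx\bar\pi_j$ throughout,
\[
\E\big[\text{revenue collected during }C_j\big]\;=\;(1+o(1))\,|C_j|\sum_p \bar\pi_j(p)\,p(1-p)\;\le\;\tfrac14(1+o(1))\,|C_j|,
\]
using $p(1-p)\le\tfrac14$ for every $p$. Summing over the $m=T\eps/2\,(1\pm o(1))$ cycles and folding the two partial end-cycles into the error term gives $\E[\rev]\le\tfrac T4 + o(T)$ (absorbing the $O(\eps T)$ slack into $\Omega(T)$ by taking $\eps$ small), hence $\E[\opt-\rev]\ge \tfrac T2 - \tfrac T4 - o(T)=\Omega(T)$ and the average revenue loss is $\Omega(1)$.

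\textbf{Main obstacle.} The only non-routine ingredient is the quantitative ``near-stationarity of $\pi_t$ within a cycle'': one must rule out that the algorithm's internal randomness --- via the heavy-tailed importance weights --- inadvertently correlates its sampling distribution with the phase of the sawtooth and thereby earns more than $p(1-p)$ per step. This is exactly where the Freedman-type concentration of $\hat G_t$ and the bound on rare large weight updates are used; everything else (the $\opt$ and fixed-price computations, the steps near $v_t\in\{0,1\}$, the partial cycles) is lower order.
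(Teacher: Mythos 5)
Your overall plan is sound and the instance is the same sawtooth the paper uses, but your route is genuinely different from the paper's, and the crux of your route --- the ``near-stationarity of $\pi_t$ within a cycle'' --- has a gap as written. You aim for the stronger statement $\E[\rev]\le T/4+o(T)$ by arguing $\pi_t\approx\bar\pi_j$ throughout each cycle; the paper never bounds the algorithm's revenue against the fixed-price level at all. Instead it only proves \emph{constant-probability} weight stability over a single half-cycle of length $m=1/\eps$ (conditioning on the event that no arm with $q_t(i_t)<1/m^2$ is sampled during those $m$ steps, which has probability $>1/e$, and using $T>m^3\ln m$ so that each conditioned update multiplies a weight by at most $e^{1/m}$), and then runs a two-case argument: whichever of the low-price arms $S_1$ or high-price arms $S_2$ carries more weight at the start of the half-cycle, the algorithm plays the ``wrong'' half of the grid with constant probability during one of the two quarter-intervals where $v_t>3/4$ or $1/4<v_t\le 1/2$, losing $\ge 1/4$ per step there. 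That yields $\Omega(m)$ expected loss per period and hence $\Omega(T)$ total, with no need for high-probability control of the weight dynamics.

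The gap in your version is the treatment of the importance-weight spikes. First, Freedman-type concentration of $\hat G_t$ around the true cumulative rewards is not the relevant tool: what matters is the realized multiplicative update at each step, not whether $\hat G$ tracks the truth. Second, your union bound over spike steps only accounts for the revenue \emph{at} those steps; it does not address that a spike can move $\pi_t$ for the \emph{remainder} of the cycle, and with the natural threshold ($q_t(i_t)<1/m^2$) the expected number of spikes per cycle is $\Theta(1)$, so you cannot discard spike-containing cycles as a vanishing fraction. The fix is available and makes your argument cleaner than you propose: an arm can only spike when its normalized weight is below roughly $1/m^2$, and a single update multiplies that weight by at most $e$ (the exponent $\eta r_t/(mq_t(i_t))$ is at most $1$), so each spike moves the sampling distribution by only $O(1/m^2)=O(\eps^2)$ in total variation, while each non-spike step moves it by at most $O(\eta m)$. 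Hence the within-cycle drift is \emph{deterministically} $O(\eps)+O(\eps^{-3/2}\sqrt{\log(1/\eps)/T})=O(\eps)+o_T(1)$ once $T\gg \eps^{-3}\log(1/\eps)$ (a hypothesis the paper states explicitly and you should too, though for fixed constant $\eps$ it holds for large $T$). With that lemma in place your per-cycle computation $\sum_p\bar\pi_j(p)\,p(1-p)\le 1/4$ and the conclusion $\E[\opt-\rev]\ge T/4-O(\eps T)-o(T)=\Omega(T)$ go through, and you in fact obtain a quantitatively sharper conclusion than the paper's.
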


\begin{proof}
Consider the following sequence of buyer values $\mathbf{v}=(v_1,\cdots,v_T)$ with changing rate $\eps$. Let $m=\frac{1}{\eps}$. For any integer $k\geq 0$ and $1\leq j\leq m$, let $v_{2km+j}=j\eps$, and $v_{2km+m+j}=1-(j-1)\eps$. In other words, the value sequence is periodic with period $2m$; in each phase of repetition, $v_t$ starts with $\eps$ and increases by $\eps$ for each step until it reaches 1, then starts with 1 and decreases by $\eps$ for each step. 

We show that the zero-regret pricing algorithm in \cite{kleinberg2003value} gives $\Omega(T)$ revenue loss, if $T>m^3\ln m=\frac{1}{\eps^3}\ln\frac{1}{\eps}$. The algorithm applies the seminal bandit algorithm EXP3 \cite{auer1995gambling} with each arm $1\leq i\leq m$ corresponding to a price $p_i=i\eps$ which is a multiple of $\eps$. We describe the algorithm in detail as follows.

\begin{algorithm}[H]
\caption{No-regret pricing algorithm in \cite{kleinberg2003value}}
\label{alg:kleinberg-leighton}
\begin{algorithmic}
 \STATE $\eta\leftarrow \sqrt{\frac{\ln m}{Tm}}$
 \STATE $w_1(i)\leftarrow 1$ for each arm $1\leq i\leq m$
 \FOR{each time step $t$}{
     \STATE $q_t(i)\leftarrow (1-\eta)\frac{w_t(i)}{W_t}+\frac{\eta}{m}$
     \STATE Draw $i_t$ from $\mathbf{q}_t$, i.e. $\Pr[i_t=i]=q_t(i)$ for each $1\leq i\leq m$
     \STATE Price at $p_t\leftarrow i_t\eps$, and observe revenue gain $r_t\leftarrow p_t\one \{ v_t \geq p_t \}$
     \STATE Update $w_{t+1}(i)\leftarrow w_{t}(i)$ for $i\neq i_t$, and $w_{t+1}(i_t)\leftarrow w_{t}(i_t)\exp\left(\eta\frac{r_t}{mq_t(i_t)}\right)$
 }
 \ENDFOR
\end{algorithmic}
\end{algorithm}

It suffices to show that the revenue loss in each time interval $I=[2mk+1,2mk+m]$ is $\Omega(m)$ in expectation. Partition the arms to two groups $S_1=\{1,2,\cdots,\frac{m}{2}\}$ and $S_2=\{\frac{m}{2}+1,\cdots,m\}$. Then we have the following observations.

\paragraph{Observation 1: with probability $>\frac{1}{e}$, $q_t(i_t)\geq\frac{1}{m^2}$ for each time step $t\in I$.} This is because the probability of selecting an arm $i$ with $q_t(i)<\frac{1}{m^2}$ is at most $\frac{m-1}{m^2}$ since there are at most $m-1$ such arms, thus the probability of $q_t(i_t)\geq\frac{1}{m^2}$ for each $t\in I$ is at least $(1-\frac{m-1}{m^2})^m>\frac{1}{e}$.

\paragraph{Observation 2: with probability $>\frac{1}{e}$, $w_{t}(i)$ increases to a factor of $e$ in the entire time interval $I$ for every arm $1\leq i\leq m$.} This is because if $q_t(i_t)\geq\frac{1}{m^2}$, $w_{t+1}(i_t)$ increases to a fraction of $\exp\left(\eta\frac{r_t}{mq_t(i_t)}\right)\leq \exp\left(\sqrt{\frac{\ln m}{Tm}}\frac{1}{m\cdot m^{-2}}\right)<e^{1/m}$ compared to $w_{t}(i_t)$ for each time step $t$ since $T>m^3\ln m$. Thus if $q_t(i_t)\geq\frac{1}{m^2}$ for each time step $t\in I$, then for any arm $1\leq i\leq m$ and time step $t\in I$,  $w_t(i)<ew_{2mk+1}(i)$ as there are only $m$ steps in $I$. Then Observation 2 follows from Observation 1.

Then we are ready to argue that in time interval $I$, the total revenue loss of Algorithm~\ref{alg:kleinberg-leighton} is $\Omega(m)$. We prove a stronger result, that the algorithm has $\Omega(m)$ revenue loss in either time interval $I_1=[2mk+\frac{m}{4}+1,2mk+\frac{m}{2}]$ or $I_2=[2mk+\frac{3m}{4}+1,2mk+m]$. Let $W_{t,1}=\sum_{i\in S_1}w_{t}(i)$ be the total weight of arms in $S_1=\{1,2,\cdots,\frac{m}{2}\}$ at time step $t\in I$, and similarly define $W_{t,2}=\sum_{i\in S_2}w_{t}(i)$ for arms $S_2$. If $W_{2mk+1,1}\geq W_{2mk+1,2}$ at the beginning of $I$, by Observation 2 $W_{t,1}\geq \frac{1}{e}W_{t,2}$ for any time step $t\in I$. By definition of $q_t(i)$, with constant probability $>\frac{1/e}{1+1/e}=\frac{1}{e+1}$, the algorithm selects an arm $i\in S_1$ at any time step $t\in I$. In particular, with probability $\frac{1}{e+1}$, an arm $i_t\in S_1$ (which corresponds a price $p_t<\frac{1}{2}$) is selected in time step $t\in I_2$. Since in time step $t\in I_2$ we always have $v_t>\frac{3}{4}$, thus the revenue loss is at least $\frac{1}{4}$ for each time step $t\in I_2$ with constant probability, thus $\Omega(m)$ in expectation in total. Similarly, if $W_{2mk+1,1}
<W_{2mk+1,2}$ at the beginning of $I$, then at each time step $t\in I_1$ we have $\frac{1}{4}<v_{t}\leq\frac{1}{2}$, but with constant probability an arm in $S_2$ is selected and results in $p_t>\frac{1}{2}$ and revenue loss $v_t>\frac{1}{4}$, thus $\Omega(m)$ in expectation in total. To summarize, in either  $I_1=[2mk+\frac{m}{4}+1,2mk+\frac{m}{2}]$ or $I_2=[2mk+\frac{3m}{4}+1,2mk+m]$, the total revenue loss is $\Omega(m)$ in expectation. Summing up the revenue loss for all phases, the total revenue loss for Algorithm~\ref{alg:kleinberg-leighton} is $\Omega(T)$ for all time steps, thus $\Omega(1)$ on average.

\end{proof}

\section{Buyer's Changing Rate is Dynamic, Known}
In this section, we study the problem where $\epsilon_t$ are publicly known by the seller. In this case, the algorithms are much simpler than the case in the main body where $\eps_t$ are unknown to the seller, and we are able to handle the case where $\eps_t$ are not decreasing for the revenue loss.

\begin{theorem}\label{thm:adv-knownchangingeps-symmetric}
If the buyer has adversarial values and changing rates $\eps_t$ known to the seller, then there exists an online pricing algorithm with symmetric loss $O(\bar{\eps})=O(\frac{1}{T}\sum_{t=1}^{T}\eps_t)$. Furthermore, no online algorithm can obtain a symmetric loss less than $\Omega(\bar{\eps})$.
\end{theorem}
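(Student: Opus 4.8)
The plan is to treat the two directions separately; both are strictly easier than the unknown-rate settings already handled in the main body, since knowing $\eps_t$ removes the need for checking rounds or guess-doubling.

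For the lower bound $\Omega(\bar\eps)$, I would simply invoke the construction in Theorem~\ref{thm:adv-knownfixedeps-symmetric}: let the buyer's value be an unbiased $\pm\bar\eps$ random walk started at $v_1=\tfrac12$ (reflected inside $[\tfrac14,\tfrac34]$). This sequence satisfies $|v_{t+1}-v_t|=\bar\eps$ for every $t$, so its changing-rate sequence is the constant $\bar\eps$ with average exactly $\bar\eps$, and even a seller who knows $v_{t-1}$ and $\bar\eps$ faces $v_t$ uniform on the two points $v_{t-1}\pm\bar\eps$, which are $2\bar\eps$ apart; hence for any price $p_t$ the expected symmetric loss at step $t$ is at least $\bar\eps$. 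So no online algorithm beats $\Omega(\bar\eps)$ on this instance.

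For the upper bound I would run Algorithm~\ref{alg:adv-knownfixedeps-symmetric} with the single constant $\eps$ replaced, at step $t$, by the known bound $\eps_t$: maintain a confidence interval with $[\ell_1,r_1]=[0,1]$, price at the midpoint $p_t=\tfrac{\ell_t+r_t}{2}$, and upon observing $\sigma_t$ set $[\ell_{t+1},r_{t+1}]=[\max(0,\ell_t-\eps_t),\min(1,p_t+\eps_t)]$ if $\sigma_t=0$ and $[\max(0,p_t-\eps_t),\min(1,r_t+\eps_t)]$ if $\sigma_t=1$. Correctness ($v_t\in[\ell_t,r_t]$ for all $t$) is a straightforward induction: the base case is trivial, the binary-search update keeps $v_t$ in the halved interval, and $|v_{t+1}-v_t|\le\eps_t$ then keeps $v_{t+1}$ in the $\eps_t$-expansion (clipping to $[0,1]$ only shrinks it). Since $v_t$ and $p_t$ both lie in $[\ell_t,r_t]$ with $p_t$ its midpoint, the symmetric loss at step $t$ is at most $a_t/2$, where $a_t:=r_t-\ell_t$.

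Finally I would bound $\sum_t a_t$. The update yields the recursion $a_{t+1}\le\tfrac12 a_t+2\eps_t$ with $a_1=1$; unrolling gives $a_t\le 2^{-(t-1)}+2\sum_{s<t}2^{s-(t-1)}\eps_s$, and summing over $t\le T$ and swapping the order of summation (each $\eps_s$ is hit by a geometric tail bounded by $2$) gives $\sum_{t=1}^T a_t\le 2+4\sum_{t=1}^T\eps_t=O(1+T\bar\eps)$. Hence the total symmetric loss is $\tfrac12\sum_t a_t=O(1+T\bar\eps)$, i.e.\ average symmetric loss $O(\tfrac1T+\bar\eps)=O(\bar\eps)$; exactly as in the fixed-$\eps$ theorem this uses $\bar\eps=\Omega(1/T)$, which is without loss of generality by replacing each $\eps_t$ with $\max(\eps_t,1/T)$. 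There is no real obstacle here — the only point needing a little care is the geometric-series swap in this last summation and the harmless $1/T$ normalization that absorbs the $a_1=1$ term.
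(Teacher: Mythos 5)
Your proposal is correct and follows essentially the same route as the paper: the lower bound is delegated to the fixed-rate construction of Theorem~\ref{thm:adv-knownfixedeps-symmetric} with $\eps_t\equiv\bar\eps$, and the upper bound runs the midpoint binary-search algorithm with the per-step expansion $\eps_t$, using the recursion $a_{t+1}\le\frac12 a_t+2\eps_t$ and the geometric unrolling so that each $\eps_t$ appears with an $O(1)$ coefficient. Your extra care with the $a_1=1$ term and the $O(1/T)$ additive is a harmless refinement the paper leaves implicit.
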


\begin{proof}
The tightness result $\Omega(\bar{\eps})$ was shown in Theorem~\ref{thm:adv-knownfixedeps-symmetric} even when $\eps_t$ are identical.
Now we show that a binary search algorithm that tracks the confidence bound gets symmetric loss $O(\bar{\eps})$. We only slightly modify Algorithm~\ref{alg:adv-knownfixedeps-symmetric} as follows. The algorithm maintains confidence bound $[\ell_t,r_t]$ at each time step, and price at $p_t=\frac{\ell_t+r_t}{2}$. If $v_t<p_t$, then $[\ell_{t+1},r_{t+1}]=[\ell_t-\eps_t,p_t+\eps_t]$; if $v_t\geq p_t$, then $[\ell_{t+1},r_{t+1}]=[p_t-\eps_t,r_t+\eps_t]$. 

Let $a_t=r_t-\ell_t$ be the length of the confidence interval at each time. The symmetric loss at time $t$ is upper bounded by $\frac{a_t}{2}$. Observe that $a_{t+1}\leq\frac{a_t}{2}+2\eps_t$, we have \begin{equation}\label{eqn:at}
    a_{t+1}\leq2\eps_t+\eps_{t-1}+2^{-1}\eps_{t-2}+\cdots+2^{-(t-1)}\eps_1+2^{-(t-1)}a_1.
\end{equation}
Notice that the total symmetric loss is upper bounded by
\(\sum_{t=1}^{T}\frac{1}{2}a_t\). When we apply \eqref{eqn:at} to the formula, the coefficient of any $\eps_t$ is at most $2+1+2^{-1}+\cdots\leq 4$. Thus the symmetric loss is $O(\bar{\eps})$.

\end{proof}

\begin{theorem}\label{thm:adv-knownchangingeps-pricing}
If the buyer has adversarial values and changing rates $\epsilon_t$ known to the seller, then there exists an online pricing algorithm with revenue loss $\tilde{O}(\bar{\e}^{1/2})$, here $\bar{\e}=\frac{1}{T}\sum_{t=1}^{T}\eps_t$. Furthermore, no online algorithm can obtain a revenue loss less than $\Omega(\bar{\e}^{1/2})$.
\end{theorem}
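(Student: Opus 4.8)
The lower bound $\Omega(\bar{\e}^{1/2})$ is inherited directly from Theorem~\ref{thm:adv-knownfixedeps-pricing}: taking all $\eps_t$ equal to a common value $\eps$ gives $\bar{\e}=\eps$, and the instance constructed there already shows that no online algorithm beats $\Omega(\eps^{1/2})=\Omega(\bar{\e}^{1/2})$. So the work is the matching upper bound, and as usual I may assume $\bar{\e}\ge 1/T$ (otherwise run the algorithm with the inflated rate $\bar{\e}'=1/T$, which only widens the confidence intervals and costs $\tilde{O}((\bar{\e}')^{1/2})=o(1)$).

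The plan is to run a phase-based algorithm in the spirit of Algorithm~\ref{alg:adv-knownfixedeps-pricing}, but with two changes that exploit knowledge of the $\eps_t$: (i) at step $t$ the confidence interval $[\ell_t,r_t]\ni v_t$ is expanded by the \emph{actual} $\eps_t$, exactly as in Theorem~\ref{thm:adv-knownchangingeps-symmetric}, so that the interval is always valid and no ``checking steps'' are ever needed; and (ii) the target width is a single global value $\lambda:=\Theta(\bar{\e}^{1/2})$. A phase consists of a binary-search sub-routine (analogous to Algorithm~\ref{alg:binarysearch}, with the known update $a_{t+1}\le a_t/2+2\eps_t$) that drives the interval width down to $\le\lambda$, followed by exploit steps that price at the bottom $\ell_t$, during which the width grows by $2\eps_t$ per step; the phase ends the first time the width would exceed $2\lambda$, at which point a new phase begins. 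Since the $\eps_t$ are known, the algorithm knows in advance exactly when each phase ends, so no adaptivity is lost.

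For the analysis I would split the loss into three parts. \textbf{Exploit steps:} pricing at $\ell_t$ with $v_t\in[\ell_t,r_t]$ always sells, so the per-step loss is at most the interval width $a_t=O(\lambda)$; exploit steps number at most $T$, contributing $O(\lambda T)=O(T\bar{\e}^{1/2})$. \textbf{Binary-search steps:} each phase's binary search costs $\tilde{O}(1)$ loss (a constant-factor shrink from $\le 2\lambda$, or an $O(\log(1/\lambda))=\tilde{O}(1)$-step shrink from $\le 1$ if the phase follows a large jump), and the number of phases is $O(T\bar{\e}^{1/2})$ because every completed phase has $\sum_{t\in\text{phase}}\eps_t\ge\lambda/2$ (the width grew by at least $\lambda$), so the phase count is at most $\frac{\sum_t\eps_t}{\lambda/2}=\frac{2T\bar{\e}}{\lambda}=O(T\bar{\e}^{1/2})$; this part contributes $\tilde{O}(T\bar{\e}^{1/2})$. \textbf{Large steps:} steps $t$ with $\eps_t>\lambda/4$ are handled separately by incurring loss $\le 1$ and restarting the binary search afterwards, and there are at most $\frac{\sum_t\eps_t}{\lambda/4}=O(T\bar{\e}^{1/2})$ of them. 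Summing, the total loss is $\tilde{O}(T\bar{\e}^{1/2})$, i.e. average revenue loss $\tilde{O}(\bar{\e}^{1/2})$.

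The main obstacle is the bookkeeping around large $\eps_t$: a single big jump can inflate the width from $O(\lambda)$ to $O(1)$, so I must (a) argue such jumps are rare enough — bounded by the total-variation budget $\sum_t\eps_t=T\bar{\e}$ divided by $\lambda$ — that the loss-$\le 1$ steps and their $\tilde{O}(1)$-loss binary-search recoveries both stay within $\tilde{O}(T\bar{\e}^{1/2})$, and (b) verify that between consecutive large steps the clean phase structure really holds, i.e. when $\eps_t\le\lambda/4$ the width evolves smoothly (never overshooting $2\lambda$ by more than a constant factor) and the ``$\sum_{t\in\text{phase}}\eps_t\ge\lambda/2$ per completed phase'' accounting is valid. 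Everything else — validity of the confidence interval and correctness of the binary search — follows verbatim from the known-rate arguments already used in Theorems~\ref{thm:adv-knownfixedeps-pricing} and~\ref{thm:adv-knownchangingeps-symmetric}. One could instead run a scale-adaptive variant that uses target width $\sqrt{\eps_i^*}$ on the dyadic class $I_i=\{t:\eps_t\in(2^{-i},2^{-i+1}]\}$; by concavity of $\sqrt{\cdot}$ this yields the same $\tilde{O}(\bar{\e}^{1/2})$ bound, but the global-$\lambda$ version is simpler and already optimal given the matching lower bound.
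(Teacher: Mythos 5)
Your proposal is correct and follows essentially the same route as the paper: the lower bound is inherited from the fixed-rate case, and the upper bound runs phases in which a binary search shrinks the interval to width $\Theta(\bar{\e}^{1/2})$, prices at the lower endpoint while expanding by the known $\eps_t$, and ends the phase once the cumulative variation reaches $\Theta(\bar{\e}^{1/2})$, giving $O(T\bar{\e}^{1/2})$ phases with $\tilde{O}(1)$ search loss each plus $O(\bar{\e}^{1/2})$ exploit loss per step. Your explicit treatment of steps with $\eps_t$ exceeding the target width is a minor bookkeeping refinement that the paper's phase definition handles implicitly (and somewhat glosses over), but it does not change the argument.
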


\begin{proof}
The tightness result has been shown in Theorem~\ref{thm:adv-knownfixedeps-pricing}. Now we construct an online pricing algorithm with revenue loss $\tilde{O}(\bar{\e}^{1/2})$. The algorithm is obtained by slightly modifying Algorithm~\ref{alg:adv-knownfixedeps-pricing}. Firstly, the update rule of confidence interval is changed to $[\ell_{t+1},r_{t+1}]\leftarrow [\ell_t-\eps_t,r_t+\eps_t]$. Secondly, the length of each phase is no longer a fixed length. A phase starting at time step $t_0+1$ has length $m$ such that $\sum_{t=t_0+1}^{t=t_0+m-1}\eps_t\leq\sqrt{\bar\eps}<\sum_{t=t_0+1}^{t=t_0+m}\eps_t$. In other words, the total possible value change in each phase is $\Theta(\sqrt{\bar\eps})$, so the revenue loss of each step is $O(\sqrt{\bar\eps})$. Thirdly, in the initialization step of each phase, the algorithm uses binary search algorithm to narrow down the length of the confidence interval to $O(\bar{\eps}^{1/2})$. 

The number of phases is $O(T\bar{\eps}^{-1/2})$ since $\sum_t\eps_t=T\bar\eps$. Then the revenue loss of all phases is contributed by the loss of binary search for locating the value to a confidence interval of length $\sqrt{\bar\eps}$ in each phase, which is $O(\log\frac{1}{\sqrt{\bar\eps}})=\tilde{O}(1)$ per phase and $\tilde{O}(T\bar{\eps}^{-1/2})$ in total; plus the revenue loss of each step in the phase, which is at most $O(\sqrt{\bar\eps})$ each step thus $O(T\sqrt{\bar\eps})$ in total. Thus the total  revenue loss of the algorithm is $\tilde{O}(T\bar{\eps}^{-1/2})$ for all time steps, thus $\tilde{O}(\bar{\eps}^{-1/2})$ on average.

\end{proof}

\begin{theorem}\label{thm:rand-knownchangingeps-pricing}
If the buyer has stochastic value and changing rates $\epsilon_t$ known to the seller, then there exists an online pricing algorithm with revenue loss $\tilde{O}(\tilde{\e}^{2/3})$, here $\tilde{\e}=\sqrt{\frac{1}{T}\sum_{t=1}^{T}\eps^2_t}$. Furthermore, no online algorithm can obtain a revenue loss less than $\Omega(\tilde{\e}^{2/3})$.
\end{theorem}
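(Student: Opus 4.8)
The plan is to merge the phase-based exploit strategy of Algorithm~\ref{alg:rand-knownfixedeps-pricing} (stochastic buyer, fixed known $\e$) with the variable-length-phase device used in the proof of Theorem~\ref{thm:adv-knownchangingeps-pricing}, where phases are cut so that an accumulated target is met; the only change is that the target is now accumulated \emph{variance} rather than accumulated drift. Concretely, since the $\epsilon_t$ are known, the seller precomputes $\tilde{\e}$ and breaks $[T]$ into consecutive phases, where the phase beginning at $t_0+1$ is the shortest window $[t_0+1,t_0+m]$ with $\sum_{t=t_0+1}^{t_0+m}\epsilon_t^2\ge\tilde{\e}^{4/3}$ (the last phase may be shorter). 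At the start of a phase, run Algorithm~\ref{alg:binarysearch}, with the dynamic update $[\ell_{t+1},r_{t+1}]=[\ell_t-\epsilon_t,\,r_t+\epsilon_t]$ as in Theorem~\ref{thm:adv-knownchangingeps-symmetric}, to locate $v_{t_1}$ inside an interval of length $O(\tilde{\e}^{2/3})$; then \emph{freeze} $\ell:=\ell_{t_1}$ and price $p_t=\ell-\delta$ for the remainder of the phase, with $\delta=\Theta\big(\tilde{\e}^{2/3}\sqrt{\log(1/\tilde{\e})}\big)$. As usual we may assume $\tilde{\e}\ge 1/T$, since otherwise $\sum_t\epsilon_t^2<1/T$ and a single binary search already gives $o(1)=\tilde O(\tilde{\e}^{2/3})$ loss.

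The analysis proceeds phase by phase, exactly parallel to Theorem~\ref{thm:rand-knownfixedeps-pricing}. Because each $[\ell_t,r_t]$ is an $\epsilon_t$-inflation of its predecessor, $v_t\in[\ell_t,r_t]$ always, so $v_t\ge\ell$; hence on the Azuma event ``$|v_t-v_{t_1}|\le\delta$ for all $t$ in the phase'' we have $p_t=\ell-\delta\le v_t$, the item sells, and the per-step revenue loss is $v_t-p_t\le(\text{interval length at }t_1)+2\delta=\tilde O(\tilde{\e}^{2/3})$. Applying Azuma's inequality to the bounded martingale $v_t$, whose increments are bounded by $\epsilon_t$ with $\sum_{\text{phase}}\epsilon_t^2\le\tilde{\e}^{4/3}$, the per-step failure probability is $\le 2\exp(-\delta^2/(2\tilde{\e}^{4/3}))\le 1/m^2$ for an appropriate constant in $\delta$ --- the exact analogue of \eqref{eqn:Azmua} --- so by a union bound the phase fails with probability $\le 1/m$, contributing at most $1$ expected loss per phase from its at most $m$ catastrophic steps. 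The binary-search initialization costs $O(\log(1/\tilde{\e}))=\tilde O(1)$ per phase. Finally, the number of phases is at most $\big(\sum_t\epsilon_t^2\big)/\tilde{\e}^{4/3}+1=T\tilde{\e}^{2/3}+1$, so the binary-search and failure contributions total $\tilde O(T\tilde{\e}^{2/3})$, and the good-step exploit loss is $\tilde O(\tilde{\e}^{2/3})$ per step over the at most $T$ exploit steps; summing and dividing by $T$ yields average revenue loss $\tilde O(\tilde{\e}^{2/3})$. For the matching lower bound, observe that for any $\e\le 1$ the constant sequence $\epsilon_t\equiv\e$ has $\tilde{\e}=\e$, so Theorem~\ref{thm:rand-knownfixedeps-pricing} already rules out revenue loss $o(\e^{2/3})=o(\tilde{\e}^{2/3})$.

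The step I expect to be the main obstacle is making the bookkeeping uniform across phases whose lengths can differ enormously. For a step with large local rate, $\epsilon_s>\tilde{\e}^{2/3}$, its phase degenerates to essentially a single step with no useful exploit window, but then $\epsilon_s^2>\tilde{\e}^{4/3}$ consumes a full variance quota, so there are only $O(T\tilde{\e}^{2/3})$ such steps and each costs at most $1$; for a very long, low-rate phase the binary search still bottoms out at $O(\tilde{\e}^{2/3})$, but the $\sqrt{\log(1/\tilde{\e})}$ in $\delta$ must in the worst case be read as $\sqrt{\log T}$ (since $m$ can be as large as $T$), which is $\tilde O(1)$ whenever $1/\tilde{\e}$ is polynomial in $T$ and is otherwise absorbed into the $o(1)$ slack of $\tilde O(\cdot)$ --- precisely the $\log$-factor accounting already present in Theorems~\ref{thm:rand-knownfixedeps-pricing} and~\ref{thm:rand-unknownfixedeps-pricing} (and the remark after the latter). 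I would also verify carefully, following the proof of Theorem~\ref{thm:adv-knownchangingeps-pricing}, that pricing uses the frozen left endpoint $\ell_{t_1}$ rather than the drifting $\ell_t$, so that the per-step loss stays $O(\delta)$ by concentration instead of growing with the (widening) confidence interval over a long phase.
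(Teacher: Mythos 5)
Your proposal is correct and follows essentially the same route as the paper's own proof: variable-length phases cut by a variance quota $\sum_t\epsilon_t^2\approx\tilde{\e}^{4/3}$, a binary search (with the $\epsilon_t$-inflated interval updates) down to width $O(\tilde{\e}^{2/3})$, exploitation at a frozen price $\ell-\delta$ with $\delta=\Theta(\tilde{\e}^{2/3}\sqrt{\log(\cdot)})$, Azuma plus a per-phase union bound giving $O(1)$ expected loss from failures per phase, a phase count of $O(T\tilde{\e}^{2/3})$, and the lower bound via the constant-rate instance of Theorem~\ref{thm:rand-knownfixedeps-pricing}. Your discussion of the $\sqrt{\log T}$ versus $\sqrt{\log(1/\tilde{\e})}$ issue for abnormally long phases is, if anything, more careful than the paper's treatment of the same point.
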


\begin{proof}
The tightness result has been shown in Theorem~\ref{thm:rand-knownfixedeps-pricing}. Now we construct an online pricing algorithm with revenue loss $\tilde{O}(\te^{2/3})$. The algorithm is obtained by slightly modifying Algorithm~\ref{alg:rand-knownfixedeps-pricing}. Firstly, the update rule of confidence interval is changed to $[\ell_{t+1},r_{t+1}]\leftarrow [\ell_t-\eps_t,r_t+\eps_t]$. Secondly, the length of each phase is no longer a fixed length. A phase starting at time step $t_0+1$ has length $m$ such that $\sum_{t=t_0+1}^{t=t_0+m-1}\eps^2_t\leq\te^{4/3}<\sum_{t=t_0+1}^{t=t_0+m}\eps^2_t$. Thirdly, let $\delta=4\tilde{O}(\te^{-2/3})$, then the fixed price in the phase is $p_t=\ell_{t_0}-\delta$. Finally, in the initialization step of each phase, the algorithm uses binary search algorithm to narrow down the length of the confidence interval to $O(4\te^{2/3})$. This is achievable since for any $t$ in the phase $\eps_t\leq \te^{2/3}$, as $\sum_{t=t_0+1}^{t=t_0+m-1}\eps^2_t\leq\te^{4/3}$.  

The parameters are selected to match the analysis of Algorithm~\ref{alg:rand-knownfixedeps-pricing}. Since $\sum_t\eps_t^2=T\te^2$, there are $\frac{T\te^2}{\Omega(\te^{4/3})}=O(T\te^{2/3})$ phases. In each phase, the binary search incurs $\tilde{O}(1)$ loss, thus $\tilde{O}(T\te^{2/3})$ in all phases. Suppose that at time $t_1$ the search algorithm ends, and we get an estimate of $v_{t_1}$ with additive accuracy $4\te^{2/3}$. In the remaining of the phase, the price is fixed to be $\ell-\delta$. For any $t\in[t_1,t_0+m]$, by Azuma's inequality, the probability that $v_{t}\not\in[v_{t_1}-\delta,v_{t_1}+\delta]$ is
\begin{eqnarray*}
\Pr[v_{t}\not\in[v_{t_1}-\delta,v_{t_1}+\delta]]&\leq& 2\exp\left(-\frac{\delta^2}{2\sum_{t}\eps_t^2}\right)\\&=&2\exp\left(-\frac{\delta^2}{2\te^{4/3}}\right)<\frac{1}{m^2}.\end{eqnarray*}
By union bound, the probability that exists $t\in[t_1,t_0+m]$ such that $v_{t}\not\in[v_{t_1}-\delta,v_{t_1}+\delta]$ is at most $\frac{1}{m}$; in this case, the revenue loss in the phase is at most 1 per step thus $m$ in total, so the expected revenue loss contributed from the case is $O(1)$. Since there are $O(T\te^{2/3})$ phases, the revenue loss of this case in all phases is $\tilde{O}(T\te^{2/3})$ in total. If $v_{t}\in[v_{t_1}-\delta,v_{t_1}+\delta]$ for every $t\in[t_1,t_0+m]$, the revenue loss for setting price $\ell-\delta\geq v_{t}-4\te^{2/3}-\delta$ is at most $O(\te^{2/3}+\delta)=O(\delta)$ for each step, thus $O(T\delta)$ in total for all phases with $\ell-\delta\geq v_{t}-4\te^{2/3}-\delta$ in each time step. 
Combine all cases above, the expected revenue loss is $\tilde{O}(T\te^{2/3})$ in total, thus $\tilde{O}(\te^{2/3})$ on average.

\end{proof}

\end{document}